\newcommand{\subfiguretitle}[1]{{\scriptsize{#1}} \\}
\newcommand{\R}{\mathbb{R}}                                     
\newcommand{\pd}[2]{\frac{\partial#1}{\partial#2}}              
\newcommand{\innerprod}[2]{\left\langle #1,\, #2 \right\rangle} 
\newcommand{\ts}{\hspace*{0.1em}}                               
\providecommand{\abs}[1]{\left\lvert #1 \right\rvert}           
\providecommand{\norm}[1]{\left\lVert #1 \right\rVert}          
\newcommand\perb[1]{%
    \mspace{1mu}\vcenter{\hbox{\tikz[x=1ex,y=1ex,line width=0.5pt]
    {\draw (0, -#1-0.5) -- (0,-#1) -- (0,#1) -- (0,#1+0.5);\draw (-0.4,-#1) -- (0.4,-#1);\draw (-0.4,#1) -- (0.4,#1);}}}\mspace{1mu}%
}
\newcommand\xqed[1]{\leavevmode\unskip\penalty9999 \hbox{}\nobreak\hfill \quad\hbox{#1}}
\newcommand{\exampleSymbol}{\xqed{$\triangle$}}
\DeclareMathOperator{\mspan}{span}
\DeclareMathOperator{\sgn}{sgn}
\DeclareMathOperator{\per}{per}
\DeclareMathOperator{\hper}{hper}
\newtheorem{theorem}{Theorem}[section]
\newtheorem{lemma}[theorem]{Lemma}
\newtheorem{proposition}[theorem]{Proposition}
\newtheorem{definition}[theorem]{Definition}
\theoremstyle{definition}
\newtheorem{example}[theorem]{Example}
\newtheorem{remark}[theorem]{Remark}
\renewcommand*\env@matrix[1][*\c@MaxMatrixCols c]{%
  \hskip -\arraycolsep
  \let\@ifnextchar\new@ifnextchar
  \array{#1}}
\begin{document}

\title{Symmetric and antisymmetric kernels \\ for machine learning problems \\ in quantum physics and chemistry}
\author[1]{Stefan Klus}
\author[2]{Patrick Gelß}
\author[3]{Feliks Nüske}
\author[2,4,5]{Frank No\'e}

\affil[1]{Department of Mathematics, University of Surrey, UK}
\affil[2]{Department of Mathematics and Computer Science, Freie Universität Berlin, Germany}
\affil[3]{Department of Mathematics, Paderborn University, Germany}
\affil[4]{Department of Physics, Freie Universität Berlin, Germany}
\affil[5]{Department of Chemistry, Rice University, Houston TX, USA}

\date{}

\maketitle

\begin{abstract}
We derive symmetric and antisymmetric kernels by symmetrizing and antisymmetrizing conventional kernels and analyze their properties. In particular, we compute the feature space dimensions of the resulting polynomial kernels, prove that the reproducing kernel Hilbert spaces induced by symmetric and antisymmetric Gaussian kernels are dense in the space of symmetric and antisymmetric functions, and propose a Slater determinant representation of the antisymmetric Gaussian kernel, which allows for an efficient evaluation even if the state space is high-dimensional. Furthermore, we show that by exploiting symmetries or antisymmetries the size of the training data set can be significantly reduced. The results are illustrated with guiding examples and simple quantum physics and chemistry applications.
\end{abstract}

\section{Introduction}

Kernel methods and neural networks are two of the most prevalent and versatile machine learning techniques. While various recent publications focus on invariant or equivariant deep learning algorithms, our goal is to derive kernel-based methods that exploit symmetries. Symmetries play an important role in many research areas such as physics and chemistry \cite{Li13, Koner20, Hutter20}, but also point cloud classification problems \cite{Qi17} or problems  defined on sets \cite{Lee19} are naturally permutation-invariant. One of the most prominent applications is in quantum physics. Systems of bosons require symmetric wave functions, whereas systems of fermions are represented by antisymmetric wave functions. Exploiting such symmetries of the underlying system is a popular and powerful approach that has been used to improve the performance of kernel-based methods as well as deep-learning algorithms. The goal is to obtain more accurate representations without increasing the number of training data points---resulting in more efficient learning algorithms---and to ensure that symmetry constraints are satisfied. In \cite{Li13} and \cite{Koner20}, for instance, neural networks and kernel approaches that take into account symmetries of molecules are constructed. These methods are then used for learning potential energy surfaces. An approach for constructing potential energy surfaces based on Gaussian processes combined with permutation-invariant kernels can be found in~\cite{Uteva17}. Gaussian processes that exploit symmetries by summing over permutations of identical atoms are also utilized in \cite{Bartok13a} to improve the accuracy of density functional theory descriptions. Moreover, the so-called SOAP (smooth overlap of atomic positions) kernel~\cite{Bartok13b} is a popular framework to design translation-, rotation-, and permutation-invariant descriptors of molecules. In \cite{Haasdonk07}, general invariant kernels (capturing discrete and continuous transformations) for pattern analysis are defined and analyzed. Recently, neural network architectures for antisymmetric wavefunctions have been proposed \cite{Han19a, Pfau20, Hermann20, Choo20, Schaetzle21} that typically operate by applying Slater determinants to the outputs. The neural networks optimize the basis functions entering the Slater determinants through a deep learning variant of a technique called \emph{backflow}. Backflow is a method to modify the basis functions used in quantum Monte Carlo as trial wavefunctions \cite{LopezRios06}. Neural network approaches such as \emph{FermiNet}~\cite{Pfau20} and \emph{PauliNet}~\cite{Hermann20} achieve extremely high accuracy with relatively few Slater determinants compared to standard quantum chemistry methods that build Slater determinants with fixed basis functions. Kernels, on the other hand, accomplish this by mapping the data to potentially infinite-dimensional feature spaces. Any continuous antisymmetric function can be approximated by antisymmetrized universal kernels. The universal approximation of symmetric and anti-symmetric functions is also studied in \cite{Han19b}.

In this work, we develop kernels that are intrinsically symmetric or antisymmetric. Although we focus mostly on physics and chemistry applications in what follows, the derived kernels can be used in the same way in other kernel-based supervised or unsupervised learning algorithms such as kernel principal component analysis (kernel PCA)~\cite{Schoelkopf98}, kernel canonical correlation analysis (kernel CCA)~\cite{Melzer02}, or support vector machines (SVMs)~\cite{Steinwart08:SVM}. The main contributions are:
\begin{itemize}[leftmargin=1em, itemsep=0ex, topsep=0.5ex]
\item We derive symmetric and antisymmetric kernels based on conventional kernels such as polynomial and Gaussian kernels and show that certain kernels can be expressed as Slater permanents or determinants.
\item We analyze the feature spaces and approximation properties of such kernels.
\item We demonstrate that these techniques improve the efficiency of kernel-based methods for problems exhibiting symmetries or antisymmetries.
\item We apply kernel-based methods for solving the time-independent Schrödinger equation to simple quantum mechanics problems. Furthermore, we predict the boiling points of molecules using kernel ridge regression.
\end{itemize}

In Section \ref{sec:Kernels and kernel-based methods}, we first introduce kernels, reproducing kernel Hilbert spaces, and kernel-based methods for solving the time-independent Schrödinger equation. Antisymmetric kernels will be derived in Section~\ref{sec:Antisymmetric kernels and their properties} and symmetric kernels in Section~\ref{sec:Symmetric kernels and their properties}.
These two sections contain the main theoretical results, in particular the analysis of the properties of the resulting polynomial and Gaussian kernels. Numerical results will be presented in Section~\ref{sec:Applications}. We conclude the paper with a list of open problems and future research.

\section{Kernels and kernel-based methods}
\label{sec:Kernels and kernel-based methods}

We will briefly recapitulate the properties of kernels and introduce the induced reproducing kernel Hilbert spaces. Additionally, we will present a kernel-based method for solving the time-independent Schrödinger equation.

\subsection{Reproducing kernel Hilbert spaces}

A kernel can be regarded as a similarity measure. We will focus on real-valued kernels, but the definitions can be easily extended to complex domains.

\begin{definition}[Kernel \cite{Steinwart08:SVM}]
Given a non-empty set $ \mathbb{X} $, a function $ k \colon \mathbb{X} \times \mathbb{X} $ is called \emph{kernel} if there exists a Hilbert space $ \mathbb{H} $ and a \emph{feature map} $ \phi \colon \mathbb{X} \to \mathbb{H} $ such that
\begin{equation*}
    k(x, x^\prime) = \innerprod{\phi(x)}{\phi(x^\prime)}.
\end{equation*}
\end{definition}

For a given kernel $ k $, the so-called \emph{Gram matrix} $ G \in \R^{m \times m} $ associated with a data set $ \{\ts x^{(i)} \ts\}_{i=1}^m \subset \mathbb{X} $ is defined by $ G_{ij} = k(x^{(i)}, x^{(j)}) $.

\begin{definition}[Positive definiteness \cite{Steinwart08:SVM}]
A function $ k \colon \mathbb{X} \times \mathbb{X} $ is called positive definite if for all $ m $, all vectors $ c = [c_1, \dots, c_m]^\top \in \R^m $, and all subsets $ \{\ts x^{(i)} \ts\}_{i=1}^m \subset \mathbb{X} $ it holds that
\begin{equation*}
    c^\top G \ts c = \sum_{i=1}^{m} \sum_{j=1}^m c_i \ts c_j \ts k(x^{(i)}, x^{(j)}) \ge 0.
\end{equation*}
\end{definition}

Strictly positive definite means that $ c^\top G \ts c = 0 $ for mutually distinct data points only if $ c = 0 $. It can be shown that a function $ k \colon \mathbb{X} \times \mathbb{X} \to \R $ is a kernel if and only if it is symmetric, i.e., $ k(x, x^\prime) = k(x^\prime, x) $, and positive definite (s.p.d.~in what follows to avoid confusion between different notions of symmetry), see~\cite{Steinwart08:SVM}. Such kernels induce so-called reproducing kernel Hilbert spaces.

\begin{definition}[RKHS \cite{Schoe01, Steinwart08:SVM}]
Let $ \mathbb{X} $ be a  non-empty set. A space  $ \mathbb{H} $ of functions $ f \colon \mathbb{X} \to \R $ is called \emph{reproducing kernel Hilbert space} (RKHS) with inner product $ \innerprod{\,\cdot\,}{\,\cdot\,}_\mathbb{H} $ if a kernel $ k $ exists such that
\begin{enumerate}[label=(\roman*), itemsep=0ex, topsep=1ex]
\item $ f(x) = \innerprod{f}{k(x, \,\cdot\,)}_\mathbb{H} $ for all $ f \in \mathbb{H} $, and
\item $ \mathbb{H} = \overline{\mspan\{k(x, \,\cdot\,) \mid x \in \mathbb{X} \}} $.
\end{enumerate}
\end{definition}

The first requirement is called the \emph{reproducing property}. For $ f = k(x, \,\cdot\,) $, this results in $ k(x, x^\prime) = \innerprod{k(x, \,\cdot\,)}{k(x^\prime\,\cdot\,)}_\mathbb{H} $ so that we can define the so-called \emph{canonical feature map} by $ \phi(x) = k(x, \,\cdot\,) $. Additionally, for a data set $ \{\ts x^{(i)} \ts\}_{i=1}^m $, we define $ \Phi = [\phi(x_1), \dots, \phi(x_m)] $ so that $ G = \Phi^\top \Phi $. For more details on kernels and reproducing kernel Hilbert spaces, we refer to \cite{Schoe01, Steinwart08:SVM}. It was shown in \cite{Wendland04, Zhou08} that not only function evaluations but also derivative evaluations can be represented as inner products in the RKHS $ \mathbb{H} $, provided the kernel is sufficiently smooth. Let now $ \alpha = (\alpha_1, \dots, \alpha_d) \in \mathbb{N}_0^d $ be a multi-index. We define $ \abs{\alpha} = \sum_{i=1}^d \alpha_i $ as usual and, for a fixed $ r \in \mathbb{N}_0 $, the index set $ I_r = \{ \alpha \in \mathbb{N}_0^d: \abs{\alpha} \le r \} $. Given a function $ f \colon \mathbb{X} \to \R $, the partial derivative of $ f $ with respect to $ \alpha $ is defined by
\begin{equation*}
    \mathcal{D}^\alpha f = \pd{^{\abs{\alpha}}}{x_1^{\alpha_1} \dots \ts \partial x_d^{\alpha_d}} f.
\end{equation*}

\begin{theorem}[\cite{Wendland04, Zhou08}]
Let $ r \in \mathbb{N}_0 $ be a non-negative number, $ k \in C^{2 \ts r}(\mathbb{X} \times \mathbb{X}) $ a kernel, and $ \mathbb{H} $ the induced RKHS. Then:
\begin{enumerate}[label=(\roman*), itemsep=0ex, topsep=1ex]
\item $ \mathcal{D}^\alpha k(x, \cdot) \in \mathbb{H} $ for any $ x \in \mathbb{X} $ and $ \alpha \in I_r $.
\item $ (\mathcal{D}^\alpha f)(x) = \innerprod{\mathcal{D}^\alpha k(x, \cdot)}{f}_{\mathbb{H}} $ for any $ x \in \mathbb{X} $, $ f \in \mathbb{H} $, and $ \alpha \in I_r $.
\end{enumerate}
In (i) and (ii), the derivative $\mathcal{D}^\alpha$ is understood as acting on the first argument of the kernel~$k$.
\end{theorem}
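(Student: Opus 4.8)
The plan is to establish (i) and (ii) simultaneously by induction on $\abs{\alpha}$, using difference quotients of kernel sections. The base case $\abs{\alpha}=0$ is trivial: $\mathcal{D}^0 k(x,\cdot)=k(x,\cdot)\in\mathbb{H}$ by the definition of the RKHS, and (ii) is then the reproducing property. For the inductive step I would assume (i) and (ii) for all multi-indices of order at most $n<r$ and fix $\alpha=\beta+e_j$ with $\abs{\beta}=n$, where $e_j$ is the $j$-th standard basis vector (used also as a multi-index) chosen so that $\mathcal{D}^\alpha=\partial_{x_j}\mathcal{D}^\beta$. For $h\neq 0$ consider
\[
  g_h \;:=\; \frac{1}{h}\bigl(\mathcal{D}^\beta k(x+h\ts e_j,\cdot)-\mathcal{D}^\beta k(x,\cdot)\bigr)\;\in\;\mathbb{H},
\]
where membership in $\mathbb{H}$ follows from the induction hypothesis (i) and linearity; as always, $\mathcal{D}^\beta$ applied to $k$ acts on the first argument.

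The heart of the argument is the evaluation of $\innerprod{g_h}{g_{h'}}_{\mathbb{H}}$. Applying the induction hypothesis (ii) with the test function $\mathcal{D}^\beta k(x',\cdot)\in\mathbb{H}$ gives a function
\[
  K(x,x')\;:=\;\innerprod{\mathcal{D}^\beta k(x,\cdot)}{\mathcal{D}^\beta k(x',\cdot)}_{\mathbb{H}}
\]
which, expressed through $k$, is a mixed $2\abs{\beta}$-th order partial derivative of $k$; since $k\in C^{2r}$ and $\abs{\beta}=\abs{\alpha}-1\le r-1$, the function $K$ is of class $C^{2r-2\abs{\beta}}\subseteq C^2$ near $(x,x)$. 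Expanding $g_h,g_{h'}$ by bilinearity yields
\[
  \innerprod{g_h}{g_{h'}}_{\mathbb{H}}=\frac{1}{h\ts h'}\bigl[K(x+h e_j,x+h' e_j)-K(x+h e_j,x)-K(x,x+h' e_j)+K(x,x)\bigr],
\]
a mixed second-order difference quotient of the $C^2$ function $K$, which therefore converges to a finite limit $L$ as $(h,h')\to(0,0)$. Hence $\norm{g_h-g_{h'}}_{\mathbb{H}}^2=\innerprod{g_h}{g_h}_{\mathbb{H}}-2\innerprod{g_h}{g_{h'}}_{\mathbb{H}}+\innerprod{g_{h'}}{g_{h'}}_{\mathbb{H}}\to L-2L+L=0$, so $(g_h)$ is Cauchy and converges in the complete space $\mathbb{H}$ to some $g\in\mathbb{H}$.

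It then remains to identify $g$ and to deduce (ii). Since the evaluation functional $f\mapsto f(y)=\innerprod{f}{k(y,\cdot)}_{\mathbb{H}}$ is bounded on $\mathbb{H}$ for every $y\in\mathbb{X}$, the convergence $g_h\to g$ in $\mathbb{H}$ implies pointwise convergence, so $g(y)=\lim_{h\to 0}\tfrac{1}{h}\bigl((\mathcal{D}^\beta k)(x+h e_j,y)-(\mathcal{D}^\beta k)(x,y)\bigr)=(\mathcal{D}^\alpha k)(x,y)$, using that $x\mapsto(\mathcal{D}^\beta k)(x,y)$ is differentiable because $k\in C^{2r}$. Thus $g=\mathcal{D}^\alpha k(x,\cdot)\in\mathbb{H}$, which is (i) for $\alpha$. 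For (ii), fix $f\in\mathbb{H}$; continuity of the inner product and the induction hypothesis (ii) give
\[
  \innerprod{\mathcal{D}^\alpha k(x,\cdot)}{f}_{\mathbb{H}}=\lim_{h\to 0}\innerprod{g_h}{f}_{\mathbb{H}}=\lim_{h\to 0}\frac{1}{h}\bigl((\mathcal{D}^\beta f)(x+h e_j)-(\mathcal{D}^\beta f)(x)\bigr).
\]
Since the left-hand side exists, the limit on the right exists as well, so $\mathcal{D}^\beta f$ is partially differentiable with respect to $x_j$ at $x$ and $(\mathcal{D}^\alpha f)(x)=\partial_{x_j}(\mathcal{D}^\beta f)(x)=\innerprod{\mathcal{D}^\alpha k(x,\cdot)}{f}_{\mathbb{H}}$, which closes the induction.

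The two points that require genuine care are the bookkeeping of which of the two arguments of $k$ each derivative operator acts on — this is precisely what makes the exponent $2r$ in the hypothesis $C^{2r}$ natural — and the passage from "$K\in C^2$ locally" to the convergence of its mixed second-order difference quotient, which I would justify via Taylor's theorem (or two nested applications of the mean value theorem) rather than by naively iterating one-dimensional limits. A further subtlety worth emphasizing is that differentiability of a general $f\in\mathbb{H}$ is not assumed but is \emph{forced a posteriori} by the convergence of $\innerprod{g_h}{f}_{\mathbb{H}}$; one should also note in passing that the mixed derivatives of $f$ built up along the induction are continuous, so that the order of differentiation in $\mathcal{D}^\alpha$ is immaterial.
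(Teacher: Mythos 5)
This theorem is not proved in the paper at all---it is imported directly from \cite{Wendland04, Zhou08}---so there is no in-paper argument to compare against. Your proof is correct and is essentially the standard argument from those references: induction over $\abs{\alpha}$ with difference quotients of kernel sections shown to be Cauchy in $\mathbb{H}$ via the mixed $2\abs{\beta}$-th order derivative $K(x,x')=\innerprod{\mathcal{D}^\beta k(x,\cdot)}{\mathcal{D}^\beta k(x',\cdot)}_{\mathbb{H}}$, identification of the limit through boundedness of point evaluations, and you correctly flag the only delicate points (joint convergence of the mixed second-order difference quotient of the $C^2$ function $K$, and the fact that differentiability of a general $f\in\mathbb{H}$ is obtained a posteriori rather than assumed).
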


We will need this property later for the approximation of differential operators. Another question is how \emph{rich} these Hilbert spaces $ \mathbb{H} $ induced by a kernel $ k $ are.

\begin{definition}[Universal kernel \cite{Micchelli06}]
Let $ \mathbb{X} $ be compact and $ C(\mathbb{X}) $ the space of all continuous functions mapping from $ \mathbb{X} $ to $ \R $ equipped with $ \norm{\,\cdot\,}_\infty $. A  kernel $ k $ is called \emph{universal} if the induced RKHS $ \mathbb{H} $ is dense in $ C(\mathbb{X}) $.
\end{definition}

That is, for a function $ f \in C(\mathbb{X}) $, we can find a function $ g \in \mathbb{H} $ such that $ \norm{g-f}_\infty < \varepsilon $ for any $ \varepsilon > 0 $. The Gaussian kernel
\begin{equation*}
    k(x, x^\prime) = \exp\left(-\frac{\norm{x-x^\prime}^2}{2 \ts \sigma^2}\right),
\end{equation*}
for instance, is universal, while the polynomial kernel
\begin{equation*}
    k(x, x^\prime) = (c + x^\top x^\prime)^q
\end{equation*}
is not. We will analyze the properties of these kernels and their symmetrized and antisymmetrized counterparts in more detail below. Various other notions of universality and the relationships between universal and characteristic kernels are discussed in \cite{Sriperumbudur11}. In what follows, we will omit the subscript $ \mathbb{H} $ if it is clear which inner product or norm we are referring to.

\subsection{Kernel-based solution of the Schrödinger equation}
\label{ssec:Kernel-based solution of the Schroedinger equation}

In \cite{KNH20}, we proposed a kernel-based method for the solution of the time-independent Schrö\-dinger equation and the approximation of other differential operators such as the generator of the Koopman operator. We will restrict ourselves to the Schrödinger equation. Let $ V $ be a potential and $ \mathcal{H} = - \tfrac{\hbar^2}{2 \mathbf{m}} \Delta + V $ the Hamiltonian,
where $ \hbar $ is the reduced Planck constant and $ \mathbf{m} $ the mass, then the time-independent Schrödinger equation is defined by
\begin{equation*}
    \mathcal{H} \psi = E \psi.
\end{equation*}
That is, we want to compute eigenfunctions $ \psi $ and the associated eigenvalues $ E $, which correspond to energies of the system. We define
\begin{align*}
    \mathrm{d}\phi(x) &= - \tfrac{\hbar^2}{2 \mathbf{m}} \sum_{l=1}^d \mathcal{D}^{2 \ts e_l} \phi(x) + V(x) \ts \phi(x),
\end{align*}
where $ e_l $ is the $ l $th unit vector, and operators
\begin{equation*}
    \mathcal{C}_{00} = \int \phi(x) \otimes \phi(x) \ts \mathrm{d} \mu(x)
    \quad \text{and} \quad
    \mathcal{C}_{01} = \int \phi(x) \otimes \mathrm{d}\phi(x) \mathrm{d}\mu(x).
\end{equation*}
Here, $ \mathcal{C}_{00} $ is the standard covariance operator (see~\cite{Baker73}) and $ \mathcal{C}_{01} $ contains the action of the Schrödinger operator. Since these integrals typically cannot be computed in practice, we estimate them using $ \mu $-distributed training data $ \{\ts x^{(i)} \ts\}_{i=1}^m $, resulting in the empirical operators
\begin{equation*}
    \widehat{\mathcal{C}}_{00} = \frac{1}{m} \sum_{i=1}^m \phi(x^{(i)}) \otimes \phi(x^{(i)})
    \quad \text{and} \quad
    \widehat{\mathcal{C}}_{01} = \frac{1}{m} \sum_{i=1}^m \phi(x^{(i)}) \otimes \mathrm{d}\phi(x^{(i)}).
\end{equation*}
Assuming that the eigenfunctions can be represented as $ \widehat{\psi} = \Phi \ts u $, i.e., they are contained in the space spanned by the functions $ \{\ts \phi(x^{(i)}) \ts\}_{i=1}^m $, we obtain a matrix eigenvalue problem 
\begin{equation*}
    G_{10} \ts u = \widehat{E} \ts G_{00} \ts u,
\end{equation*}
where the entries of the (generalized) Gram matrices $ G_{00}, G_{10} \in \R^{m \times m} $ are defined by
\begin{equation*}
    \big[G_{00}\big]_{ij} = \left[\phi(x^{(i)})\right](x^{(j)}) = k(x^{(i)}, x^{(j)})
\end{equation*}
and
\begin{equation*}
    \big[G_{10}\big]_{ij}
        = \left[\mathrm{d}\phi(x^{(i)})\right](x^{(j)})  
        = -\tfrac{\hbar^2}{2 \mathbf{m}} \sum_{l=1}^d \mathcal{D}^{2 e_l} k(x_i, x_j) + V(x_i) \ts k(x_i, x_j).
\end{equation*}
Eigenfunctions are then of the form
\begin{equation*}
    \widehat{\psi} = \Phi \ts u = \sum_{i=1}^m u_i \ts k(x^{(i)}, \,\cdot\,).
\end{equation*}
A detailed derivation and numerical results for simple quantum mechanics problems---the quantum harmonic oscillator and the hydrogen atom---can be found in \cite{KNH20}.

\section{Antisymmetric kernels and their properties}
\label{sec:Antisymmetric kernels and their properties}

In this section, we will introduce the notion of \emph{antisymmetric kernels} and define antisymmetric counterparts of well-known kernels such as the polynomial kernel and the Gaussian kernel. Furthermore, we analyze the properties of the resulting reproducing kernel Hilbert spaces. Most results can then be carried over to the symmetric case, which will be studied in Section~\ref{sec:Symmetric kernels and their properties}.

\subsection{Antisymmetric kernels}
\label{sec:Antisymmetric kernels}

Let $ \mathbb{X} \subset \R^d $ be the state space. Furthermore, let $ S_d $ be the symmetric group and $ \pi \in S_d $ a permutation. With a slight abuse of notation, we define $ \pi(x) = [x_{\pi(1)}, \dots, x_{\pi(d)}]^\top $ to be the vector $ x \in \mathbb{X} $ permuted by $ \pi $. A function $ f \colon \mathbb{X} \to \R $ is called antisymmetric if
\begin{equation*}
    f(x) = \sgn(\pi) \ts f(\pi(x)),
\end{equation*}
where $ \sgn(\pi) $ denotes the sign of the permutation $ \pi $, which is $ 1 $ if the number of transpositions is even and $ -1 $ if it is odd. We define the \emph{antisymmetrization operator} $ \mathcal{A} $ by
\begin{equation*}
    (\mathcal{A} f)(x) = \frac{1}{d!} \sum_{\pi \in S_d} \sgn(\pi) f(\pi(x)).
\end{equation*}

\begin{remark}
In the same way, we can consider state spaces of the form $ \mathbb{X} \subset \bigoplus_{i=1}^{d_x} \R^{d_y} $. Functions would then be antisymmetric with respect to permutations of vectors in $ \R^{d_y} $. That is, for $ x = [x_1, \dots, x_{d_x}]^\top $ with $ x_i \in \R^{d_y} $, the permuted vector is then $ \pi(x) = [x_{\pi(1)}, \dots, x_{\pi(d_x)}]^\top $. For typical quantum mechanics applications, for instance, $ d_y = 3 $ (every particle has a position in a three-dimensional space) and $ d_x $ the number of fermions (or bosons in the symmetric case). The special case $ d_x = 2 $ is considered in \cite{Pahikkala15}, where the spectral properties of symmetric and antisymmetric pairwise kernels are analyzed. Supervised learning problems with such pairwise kernels are discussed in \cite{Gnecco18}.
\end{remark}

Our goal is to define antisymmetric kernels for arbitrary $ d $, which can then be used in kernel-based learning algorithms.

\begin{definition}[Antisymmetric kernel function]
Let $ k \colon \mathbb{X} \times \mathbb{X} \to \R $ be a kernel. We define an antisymmetric function $ k_a \colon \mathbb{X} \times \mathbb{X} \to \R $ by
\begin{equation*}
    k_a(x, x^\prime) = \frac{1}{d!^2} \sum_{\pi \in S_d} \sum_{\pi^\prime \in S_d} \sgn(\pi) \ts \sgn(\pi^\prime) \ts k\big(\pi(x), \pi^\prime(x^\prime)\big).
\end{equation*}
\end{definition}

Clearly, if $ k(x, x^\prime) = k(x^\prime, x) $, then also $ k_a(x, x^\prime) = k_a(x^\prime, x) $. Furthermore, for a fixed permutation $ \widehat{\pi} \in S_d $, it holds that
\begin{equation} \label{eq:antisymmetric kernel - fixed permutation}
\begin{split}
    k_a(\widehat{\pi}(x), x^\prime)
        &= \frac{1}{d!^2} \sum_{\pi \in S_d} \sum_{\pi^\prime \in S_d} \sgn(\pi) \ts \sgn(\pi^\prime) \ts k\big(\pi(\widehat{\pi}(x)), \pi^\prime(x^\prime)\big) \\
        &= \sgn(\widehat{\pi}) \frac{1}{d!^2} \sum_{\pi \in S_d} \sum_{\pi^\prime \in S_d} \ts \sgn(\pi \circ \widehat{\pi}) \ts \sgn(\pi^\prime) \ts k\big([\pi \circ \widehat{\pi}](x), \pi^\prime(x^\prime)\big) \\
        &= \sgn(\widehat{\pi}) \frac{1}{d!^2} \sum_{\pi \in S_d} \sum_{\pi^\prime \in S_d} \ts \sgn(\pi) \ts \sgn(\pi^\prime) \ts k\big(\pi(x), \pi^\prime(x^\prime)\big) \\
        &= \sgn(\widehat{\pi}) \ts k_a(x, x^\prime).
\end{split}
\end{equation}
Here, we used the fact that $ \sgn(\widehat{\pi}) = \sgn(\widehat{\pi}^{-1}) $ and $ \sgn(\pi \circ \widehat{\pi}) = \sgn(\pi) \ts \sgn(\widehat{\pi}) $. Additionally, we utilized the property that for a function $ g \colon S_d \to \R $ it holds that $ \sum_{\pi \in S_d} g(\pi) = \sum_{\pi \in S_d} g(\pi \circ \widehat{\pi}) $, which corresponds to a reordering of the summands. Thus, $ k_a $ is antisymmetric in both arguments. From~\eqref{eq:antisymmetric kernel - fixed permutation} it directly follows that $k_a(x, x^\prime) = 0$ if at least two entries of $x$ or $x^\prime$ are equal.\!\footnote{Assume w.l.o.g.\ that $x_i = x_j$ for some indices $i \neq j$. Let $\widehat{\pi}$ be the permutation which only swaps the positions $i$ and $j$, then it holds that $k_a(x, x^\prime) = k_a(\widehat{\pi}(x), x^\prime) = \sgn(\widehat{\pi}) \ts k_a(x, x^\prime) = -k_a(x, x^\prime)$.}

\begin{lemma}
The function $ k_a $ defines an s.p.d.\ kernel.
\end{lemma}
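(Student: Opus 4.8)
The plan is to reduce the whole statement to the corresponding property of the underlying kernel $k$ by exhibiting an explicit feature map for $k_a$. Writing $k(x,x') = \innerprod{\phi(x)}{\phi(x')}$ for a feature map $\phi \colon \mathbb{X} \to \mathbb{H}$, I would set
\[
    \phi_a(x) = \frac{1}{d!}\sum_{\pi \in S_d} \sgn(\pi)\, \phi(\pi(x)) = (\mathcal{A}\ts\phi)(x),
\]
and observe, by bilinearity of $\innerprod{\,\cdot\,}{\,\cdot\,}$, that $\innerprod{\phi_a(x)}{\phi_a(x')} = k_a(x, x')$ --- a direct expansion of the double sum. Hence $k_a$ is a kernel with feature map $\phi_a$; in particular it is symmetric (which also follows at once from $k(x,x') = k(x',x)$) and positive definite, since for any data points and coefficients one has $c^\top G\, c = \norm{\sum_i c_i\, \phi_a(x^{(i)})}^2 \ge 0$, where $G$ denotes the Gram matrix of $k_a$.

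The remaining, and main, point is strict positive definiteness, and here one has to be careful about the domain. By the footnote, $k_a(x, \cdot) \equiv 0$ whenever $x$ has two equal coordinates; moreover \eqref{eq:antisymmetric kernel - fixed permutation} gives $\phi_a(\widehat{\pi}(x)) = \sgn(\widehat{\pi})\,\phi_a(x)$, so two data points in the same $S_d$-orbit already make $G$ singular. Strict positive definiteness must therefore be understood on a domain where both degeneracies are excluded --- e.g.\ the open Weyl chamber $\{\,x \in \R^d : x_1 < \dots < x_d\,\}$, or any set of $S_d$-orbit representatives with pairwise distinct coordinates, which is precisely the physically relevant fermionic configuration space. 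On such a domain, I would fix mutually distinct points $x^{(1)}, \dots, x^{(m)}$ and assume $c^\top G\, c = 0$, i.e.\ $\sum_{i=1}^m \sum_{\pi \in S_d} c_i\, \sgn(\pi)\, \phi(\pi(x^{(i)})) = 0$ in $\mathbb{H}$. The key combinatorial step is that the $d!\ts m$ arguments $\pi(x^{(i)})$ are pairwise distinct: an equality $\pi(x^{(i)}) = \pi'(x^{(j)})$ forces $x^{(i)}$ and $x^{(j)}$ into the same orbit, hence $i = j$, and then $\pi = \pi'$ because a point with distinct coordinates has trivial stabilizer. Assuming $k$ is strictly positive definite (as the Gaussian kernel is), the vectors $\phi(y)$ are linearly independent over distinct $y$, so every coefficient $c_i\,\sgn(\pi)$ vanishes and thus $c = 0$.

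I expect the bookkeeping around the domain to be the only real obstacle: one needs to (i) notice and remove the two sources of degeneracy, (ii) verify that after the restriction the orbit $\{\pi(x^{(i)})\}_{\pi, i}$ genuinely has $d!\ts m$ distinct elements, and (iii) invoke strict positive definiteness of $k$ to pass from a vanishing sum of feature vectors to vanishing coefficients. Everything else --- the feature-map factorisation, symmetry, and positive semidefiniteness --- is immediate from the construction of $\phi_a$.
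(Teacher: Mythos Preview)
Your first paragraph is exactly the paper's proof: define $\phi_a = \mathcal{A}\phi$, expand the inner product to recover $k_a$, and read off symmetry and positive definiteness via $c^\top G_a\,c = \norm{\sum_i c_i\,\phi_a(x^{(i)})}^2 \ge 0$. Nothing more is needed.

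The remaining two paragraphs rest on a misreading of the abbreviation. In this paper ``s.p.d.'' is introduced explicitly as shorthand for \emph{symmetric and positive definite} (see the sentence following Definition~2.2), not for \emph{strictly} positive definite. The lemma therefore asks only for $c^\top G_a\,c \ge 0$, which you have already established; the paper's own proof stops at that point. Your additional analysis of strict positive definiteness on a fundamental domain is correct and interesting --- the orbit-counting argument and the appeal to strict positive definiteness of $k$ go through as you describe --- but it is not part of the claim, and indeed the paper never asserts that $k_a$ is strictly positive definite (it could not be on all of $\mathbb{X}$, for precisely the degeneracy reasons you identify).
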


\begin{proof}
We have
\begin{align*}
    k_a(x, x^\prime)
        &= \frac{1}{d!^2} \sum_{\pi \in S_d} \sum_{\pi^\prime \in S_d} \sgn(\pi) \ts \sgn(\pi^\prime) \ts k\big(\pi(x), \pi^\prime(x^\prime)\big) \\
        &= \frac{1}{d!^2} \sum_{\pi \in S_d} \sum_{\pi^\prime \in S_d} \sgn(\pi) \ts \sgn(\pi^\prime) \ts \innerprod{\phi(\pi(x))}{\phi(\pi^\prime(x^\prime))} \\
        &= \innerprod{\frac{1}{d!} \sum_{\pi \in S_d} \sgn(\pi) \ts \phi(\pi(x))}{\frac{1}{d!} \sum_{\pi^\prime \in S_d} \sgn(\pi^\prime) \ts \phi(\pi^\prime(x^\prime))} \\
        &= \innerprod{\phi_a(x)}{\phi_a(x^\prime)},
\end{align*}
where $ \phi_a(x) = \frac{1}{d!} \sum_{\pi \in S_d} \sgn(\pi) \ts \phi(\pi(x)) $. That is, $ k_a $ is a kernel. Symmetry was shown above. To see that the function is positive definite, let $ c = [c_1, \dots, c_m]^\top \in \R^m $ be a coefficient vector and $ \{\ts x^{(i)} \ts\}_{i=1}^m \subset \mathbb{X} $. Then
\begin{align*}
    c^\top G_a \ts c
        &= \sum_{i=1}^m \sum_{j=1}^m c_i \ts c_j \ts k_a(x^{(i)}, x^{(j)})
        = \sum_{i=1}^m \sum_{j=1}^m c_i \ts c_j \innerprod{\phi_a(x^{(i)})}{\phi_a(x^{(j)})} \\
        &= \innerprod{\sum_{i=1}^m c_i \ts \phi_a(x^{(i)})}{\sum_{j=1}^m c_j \ts \phi_a(x^{(j)})}
        = \norm{\sum_{i=1}^m c_i \ts \phi_a(x^{(i)})}^2 \ge 0. \qedhere
\end{align*}

\end{proof}

The antisymmetrized two- and three-dimensional Gaussian kernels are visualized in Figure~\ref{fig:Gaussian kernel}. The feature space mapping of the antisymmetric kernel $ k_a $ is the antisymmetrization operator $ \mathcal{A} $ applied to the feature space mapping of the kernel $ k $.

\begin{figure}
    \centering
    \begin{minipage}[t]{0.45\textwidth}
        \centering
        \subfiguretitle{(a)}
        \vspace*{1ex}
        \includegraphics[width=0.8\textwidth]{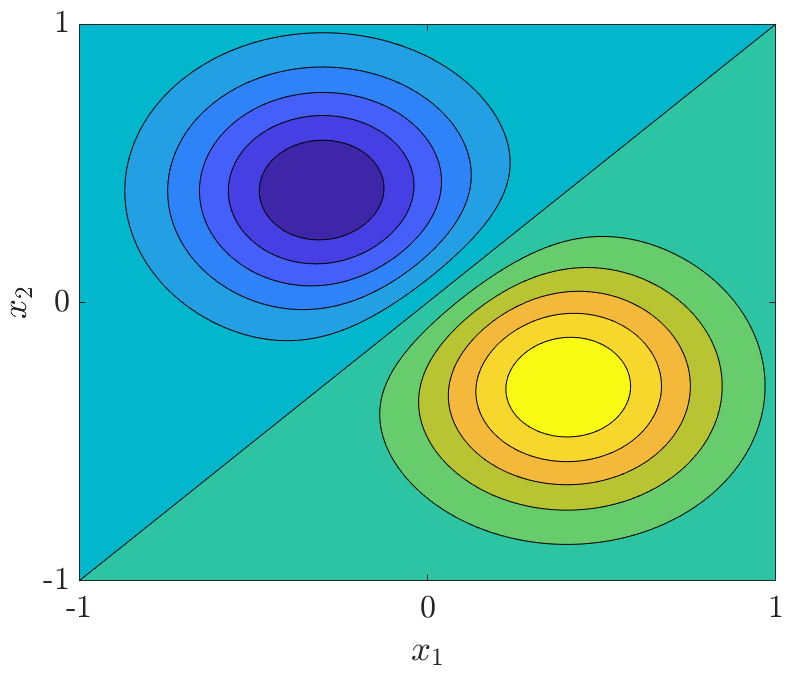}
    \end{minipage}
    \begin{minipage}[t]{0.45\textwidth}
        \centering
        \subfiguretitle{(b)}
        \includegraphics[width=0.9\textwidth]{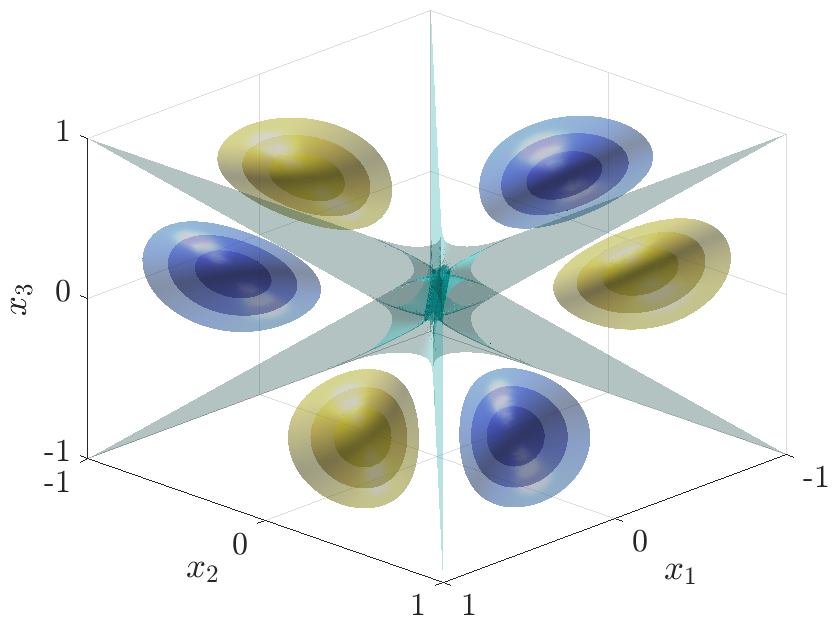}
    \end{minipage}
    \caption{(a) Two-dimensional antisymmetric Gaussian kernel $ k_a $, where $ x^\prime = [0.4, -0.3]^\top $ and $ \sigma = 0.3 $. Yellow corresponds to positive and blue to negative values. (b) Three-dimensional antisymmetric Gaussian kernel $ k_a $, where $ x^\prime = [0.3, -0.6, 0.4]^\top $ and $ \sigma = 0.2 $. The separating isosurface in the middle is defined by $ k_a(x, x^\prime) = 0 $.}
    \label{fig:Gaussian kernel}
\end{figure}

\begin{example} \label{ex:antisymmetrized polynomial kernel}
For $ \mathbb{X} \subset \R^2 $, the feature space of the quadratic kernel $ k(x, x^\prime) = (1 + x^\top x^\prime)^2 $ is spanned by $ \{1, x_1, x_2, x_1^2, x_1 \ts x_2, x_2^2 \} $ and thus six-dimensional. The feature space of the antisymmetrized kernel $ k_a $ is spanned by the two antisymmetric functions $ \{ x_1 - x_2, x_1^2 - x_2^2 \} $. This illustrates that the feature space is significantly reduced. \exampleSymbol
\end{example}

Polynomial kernels of arbitrary degree $ p $ for $ d $-dimensional spaces will be discussed in more detail in Section~\ref{ssec:Antisymmetric polynomial kernels}.

\begin{remark}
The Mercer features of a kernel $ k $ are defined by the eigenfunctions of the integral operator
\begin{equation*}
    (\mathcal{T}_k f)(x) = \int k(x, x^\prime) \ts f(x^\prime) \ts \dd \mu(x^\prime)
\end{equation*}
multiplied by the square root of the associated eigenvalues $ \lambda $, see~\cite{Steinwart08:SVM}. The Mercer features of an antisymmetric kernel $ k_a $ are automatically antisymmetric. This can be seen as follows: Let $ \varphi $ be an eigenfunction of $ \mathcal{T}_{k_a} $ with corresponding eigenvalue $ \lambda $, then
\begin{equation*}
    \lambda \ts \varphi(\pi(x))
        = \int k_a(\pi(x), x^\prime) \ts \varphi(x^\prime) \ts \dd \mu(x^\prime)
        = \sgn(\pi) \int k_a(x, x^\prime) \ts \varphi(x^\prime) \ts \dd \mu(x^\prime) = \lambda \ts \sgn(\pi) \ts \varphi(x).
\end{equation*}
Mercer features of the Gaussian kernel and its antisymmetric and symmetric (see Section~\ref{sec:Symmetric kernels and their properties}) counterparts---computed by a spectral decomposition of the covariance operator, cf.~\cite{Mollenhauer20}---are shown in Figure~\ref{fig:Mercer}.

\begin{figure}
    \centering
    \begin{minipage}{0.25\textwidth}
        \centering
        \subfiguretitle{~~(a)}
        \includegraphics[width=\textwidth]{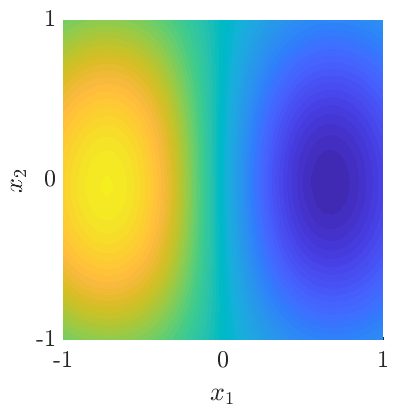} \\
        \includegraphics[width=\textwidth]{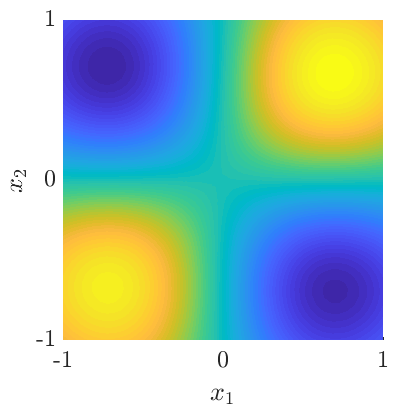}
    \end{minipage}
    \begin{minipage}{0.25\textwidth}
        \centering
        \subfiguretitle{~~(b)}
        \includegraphics[width=\textwidth]{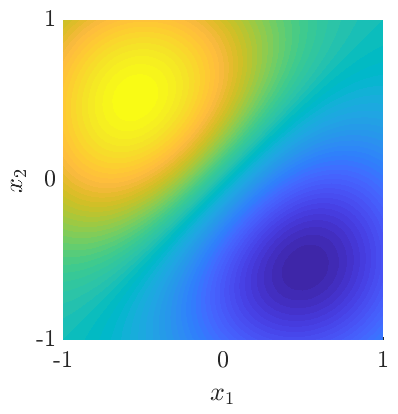} \\
        \includegraphics[width=\textwidth]{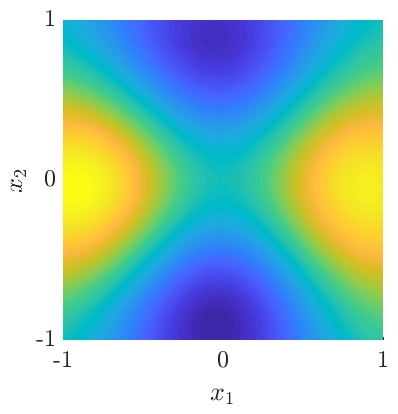}
    \end{minipage}
    \begin{minipage}{0.25\textwidth}
        \centering
        \subfiguretitle{~~(c)}
        \includegraphics[width=\textwidth]{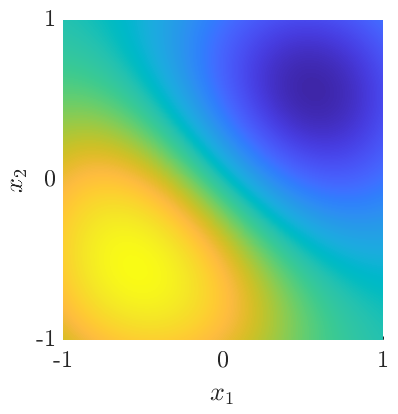} \\
        \includegraphics[width=\textwidth]{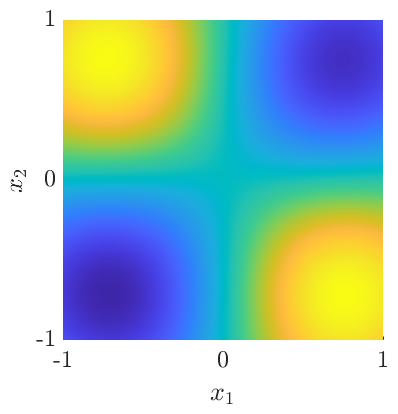}
    \end{minipage}
    \hspace{1ex}
    \begin{minipage}{0.06\textwidth}
        \includegraphics[width=\textwidth]{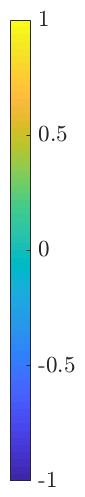}
    \end{minipage}
    \caption{(a) Numerically computed normalized features of the Gaussian kernel $ k $ with bandwidth~$ \sigma = \frac{1}{2} $. (b) Similar-looking but antisymmetric features of the associated kernel~$ k_a $. (c) Symmetric features of the kernel~$ k_s $ derived in Section~\ref{sec:Symmetric kernels and their properties}.}
    \label{fig:Mercer}
\end{figure}

\end{remark}

\begin{definition}[Permutation invariance]
We call a kernel \emph{permutation-invariant} if
\begin{equation*}
    k(x, x^\prime) = k(\pi(x), \pi(x^\prime))
\end{equation*}
for all permutations $ \pi \in S_d $.
\end{definition}

The Gaussian kernel and the polynomial kernel are permutation-invariant since the standard inner product and induced norm are permutation-invariant, i.e., $ \innerprod{x}{x^\prime} = \innerprod{\pi(x)}{\pi(x^\prime)} $ for a permutation $ \pi \in S_d $. The antisymmetric kernel $ k_a $ is permutation-invariant by construction. While many kernels used in practice are naturally permutation-invariant, an open question is whether this assumption limits the expressivity of the induced function space. We will analyze the properties of the Gaussian kernel in Section~\ref{ssec:Antisymmetric Gaussian kernels}. The permutation-invariance allows us to simplify the representation of the antisymmetric kernel.

\begin{lemma} \label{lem:permutation invariant asymmetric kernel}
Given a permutation-invariant kernel $ k $, it holds that
\begin{equation*}
    k_a(x, x^\prime) = \frac{1}{d!} \sum_{\pi \in S_d} \sgn(\pi) \ts k\big(\pi(x), x^\prime\big) = \frac{1}{d!} \ts \sum_{\pi \in S_d} \sgn(\pi) \ts k\big(x, \pi(x^\prime)\big).
\end{equation*}
\end{lemma}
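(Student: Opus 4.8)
The plan is to start from the definition of $k_a$ as a double sum over $S_d \times S_d$ and use permutation-invariance of $k$ to collapse one of the two sums. Concretely, in the double sum $k_a(x, x^\prime) = \frac{1}{d!^2} \sum_{\pi, \pi^\prime} \sgn(\pi)\sgn(\pi^\prime) k(\pi(x), \pi^\prime(x^\prime))$, I would apply $(\pi^\prime)^{-1}$ to both arguments of $k$: since $k$ is permutation-invariant, $k(\pi(x), \pi^\prime(x^\prime)) = k\big((\pi^\prime)^{-1}\pi(x), (\pi^\prime)^{-1}\pi^\prime(x^\prime)\big) = k\big([(\pi^\prime)^{-1} \circ \pi](x), x^\prime\big)$.

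Next I would reindex. For each fixed $\pi^\prime$, substitute $\rho = (\pi^\prime)^{-1} \circ \pi$; as $\pi$ ranges over $S_d$ so does $\rho$, and $\sgn(\pi) = \sgn(\pi^\prime)\sgn(\rho)$ because $\sgn$ is a homomorphism and $\sgn((\pi^\prime)^{-1}) = \sgn(\pi^\prime)$. Hence $\sgn(\pi)\sgn(\pi^\prime) = \sgn(\pi^\prime)^2 \sgn(\rho) = \sgn(\rho)$. The inner sum over $\pi$ therefore becomes $\sum_{\rho \in S_d} \sgn(\rho)\, k(\rho(x), x^\prime)$, which no longer depends on $\pi^\prime$. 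The outer sum over $\pi^\prime$ then contributes a factor $d!$, and together with the prefactor $\frac{1}{d!^2}$ this yields $k_a(x, x^\prime) = \frac{1}{d!}\sum_{\rho \in S_d} \sgn(\rho)\, k(\rho(x), x^\prime)$, which is the first claimed identity. The second identity, with the permutation acting on $x^\prime$ instead, follows by the symmetric argument (collapsing the $\pi$-sum instead of the $\pi^\prime$-sum), or alternatively from the first identity combined with the symmetry $k_a(x, x^\prime) = k_a(x^\prime, x)$ and $k(x, x^\prime) = k(x^\prime, x)$.

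There is no real obstacle here; the only point requiring a little care is the bookkeeping with signs and the validity of the reindexing $\pi \mapsto (\pi^\prime)^{-1} \circ \pi$ as a bijection of $S_d$ — this is exactly the same ``reordering of summands'' trick already used in \eqref{eq:antisymmetric kernel - fixed permutation}, so I would simply invoke it. I should also remark explicitly that permutation-invariance is used precisely once, in the step $k(\pi(x), \pi^\prime(x^\prime)) = k([(\pi^\prime)^{-1}\circ\pi](x), x^\prime)$, and that without it the double sum genuinely does not collapse.
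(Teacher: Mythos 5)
Your proposal is correct and follows essentially the same route as the paper: apply permutation-invariance to rewrite $k(\pi(x), \pi^\prime(x^\prime))$ as $k\big([(\pi^\prime)^{-1}\circ\pi](x), x^\prime\big)$, absorb the signs via $\sgn(\pi)\sgn(\pi^\prime) = \sgn\big((\pi^\prime)^{-1}\circ\pi\big)$, and observe that each permutation then occurs $d!$ times, collapsing the double sum. The paper's proof is exactly this reindexing argument, with the second identity likewise obtained analogously.
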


\begin{proof}
We obtain
\begin{align*}
    k_a(x, x^\prime)
        &= \frac{1}{d!^2} \sum_{\pi \in S_d} \sum_{\pi^\prime \in S_d} \sgn(\pi) \ts \sgn(\pi^\prime) \ts k\big(\pi(x), \pi^\prime(x^\prime)\big) \\
        &= \frac{1}{d!^2} \sum_{\pi \in S_d} \sum_{\pi^\prime \in S_d} \sgn(\pi) \ts \sgn(\pi^\prime) \ts k\big(\left[(\pi^\prime)^{-1} \circ \pi\right](x), x^\prime\big) \\
        &= \frac{1}{d!^2} \sum_{\pi \in S_d} \sum_{\pi^\prime \in S_d} \sgn\big((\pi^\prime)^{-1} \circ \pi\big) \ts k\big(\left[(\pi^\prime)^{-1} \circ \pi\right](x), x^\prime\big) \\
        &= \frac{1}{d!} \ts \sum_{\pi \in S_d} \sgn(\pi) \ts k\big(\pi(x), x^\prime\big)
\end{align*}
since all permutations occur $ d! $ times. In the third line, we used the same properties of permutations as above. The proof for the second representation is analogous.
\end{proof}

For the sake of simplicity, assume now that the kernel $ k $ is permutation-invariant. We want to show that for a universal kernel $ k $, the reproducing kernel Hilbert space induced by the corresponding antisymmetric kernel $ k_a $ is dense in the space of antisymmetric functions.

\begin{proposition} \label{pro:universal antisymmetric kernel}
Let $ \mathbb{X} $ be bounded. Given a universal, permutation-invariant, continuous kernel $ k $, the space $ \mathbb{H}_a $ induced by $ k_a $ is dense in the space of continuous antisymmetric functions given by $ C_a(\mathbb{X}) = \{ f \in C(\mathbb{X}) \mid f \text{ is antisymmetric} \} $.
\end{proposition}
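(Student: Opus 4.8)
The plan is to transport a uniform approximation in $C(\mathbb{X})$ — which is available because $k$ is universal — to a uniform approximation in $C_a(\mathbb{X})$ by applying the antisymmetrization operator $\mathcal{A}$. The two facts that make this work are that $\mathcal{A}$ acts as the identity on antisymmetric functions, and that $\mathcal{A}$ sends $\mspan\{k(x,\cdot) \mid x \in \mathbb{X}\}$ into $\mspan\{k_a(x,\cdot) \mid x \in \mathbb{X}\} \subseteq \mathbb{H}_a$.

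\emph{Step 1 (reduce to finite kernel combinations).} Since $k$ is continuous and bounded on $\mathbb{X}$, the reproducing property gives $\abs{g(x)} = \abs{\innerprod{g}{k(x,\cdot)}} \le \norm{g}_{\mathbb{H}} \sqrt{k(x,x)} \le C \norm{g}_{\mathbb{H}}$ for all $g \in \mathbb{H}$, so $\norm{\cdot}_{\mathbb{H}}$-convergence implies $\norm{\cdot}_\infty$-convergence. Because $\mathbb{H} = \overline{\mspan\{k(x,\cdot)\}}$ (closure in $\norm{\cdot}_{\mathbb H}$) and $\mathbb{H}$ is $\norm{\cdot}_\infty$-dense in $C(\mathbb{X})$ by universality, the set $\mspan\{k(x,\cdot) \mid x \in \mathbb{X}\}$ is itself $\norm{\cdot}_\infty$-dense in $C(\mathbb{X})$. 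Hence, given $f \in C_a(\mathbb{X})$ and $\varepsilon > 0$, we can choose points $x^{(1)}, \dots, x^{(n)} \in \mathbb{X}$ and coefficients $c_i \in \R$ such that $g := \sum_{i=1}^n c_i \ts k(x^{(i)}, \cdot)$ satisfies $\norm{g - f}_\infty < \varepsilon$.

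\emph{Step 2 (antisymmetrize).} Apply $\mathcal{A}$ in the free variable. Since $f$ is antisymmetric, $\mathcal{A}f = f$, and since $\mathbb{X}$ is invariant under coordinate permutations (implicit in speaking of antisymmetric functions on $\mathbb{X}$), the average over $S_d$ is a contraction for $\norm{\cdot}_\infty$, so $\norm{\mathcal{A}g - f}_\infty = \norm{\mathcal{A}(g - f)}_\infty \le \frac{1}{d!} \sum_{\pi \in S_d} \norm{g - f}_\infty = \norm{g - f}_\infty < \varepsilon$. It remains to see $\mathcal{A}g \in \mathbb{H}_a$: by linearity this reduces to the single identity $\big(\mathcal{A}[k(x,\cdot)]\big)(y) = \frac{1}{d!} \sum_{\pi \in S_d} \sgn(\pi) \ts k(x, \pi(y)) = k_a(x, y)$, which is exactly the second representation in Lemma~\ref{lem:permutation invariant asymmetric kernel}. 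Therefore $\mathcal{A}g = \sum_{i=1}^n c_i \ts k_a(x^{(i)}, \cdot) \in \mspan\{k_a(x,\cdot)\} \subseteq \mathbb{H}_a$ and approximates $f$ within $\varepsilon$ in $\norm{\cdot}_\infty$, which proves the claimed density.

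\emph{Expected main obstacle.} The only genuinely delicate point is Step~1: passing from $\norm{\cdot}_{\mathbb{H}}$-density of $\mspan\{k(x,\cdot)\}$ in $\mathbb{H}$ to $\norm{\cdot}_\infty$-density, i.e.\ producing an \emph{explicit finite} kernel combination rather than an abstract element of $\mathbb{H}$; this relies on $x \mapsto k(x,x)$ being bounded on $\mathbb{X}$ (true, e.g., for the Gaussian kernel). An alternative that avoids discretizing is to check that $\mathcal{A}$ extends to a bounded operator $\mathbb{H} \to \mathbb{H}_a$ with $\norm{\mathcal{A}g}_{\mathbb{H}_a} \le \norm{g}_{\mathbb{H}}$ — immediate from $\phi_a = \mathcal{A}\phi$ — and to apply it directly to any $g \in \mathbb{H}$ with $\norm{g-f}_\infty < \varepsilon$. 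A secondary point worth making explicit is the standing assumption that $\mathbb{X}$ is permutation-invariant, without which $\mathcal{A}$ need not map $C(\mathbb{X})$ into itself and the sup-norm contraction estimate fails.
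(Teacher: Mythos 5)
Your proof is correct and follows essentially the same route as the paper: approximate $f$ by a finite expansion $\sum_i c_i\, k(x^{(i)},\cdot)$ via universality, use $\mathcal{A}f = f$ and the representation $\mathcal{A}[k(x,\cdot)] = k_a(x,\cdot)$ from Lemma~\ref{lem:permutation invariant asymmetric kernel}, and bound the error by the sup-norm contraction of the averaging over $S_d$ — which is exactly the paper's displayed estimate, just phrased in operator language. Your Step~1 merely makes explicit (via boundedness of $k$ on the diagonal) the standard fact, used implicitly in the paper, that universality gives $\norm{\cdot}_\infty$-density of the \emph{finite} span of kernel sections.
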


\begin{proof}
Let $ f $ be antisymmetric. It follows that $ f(x) = \sgn(\pi) f(\pi(x)) $ for all $ \pi \in S_d $ and thus
\begin{equation*}
    f(x) = \frac{1}{d!} \sum_{\pi \in S_d} \sgn(\pi) f(\pi(x)).
\end{equation*}
Since $ k $ is assumed to be universal, we can find coefficients $ \alpha_i \in \R $ and vectors $ x^{(i)} \in \mathbb{X} $ such that $ \norm{\sum_{i=1}^n \alpha_i \ts k(\,\cdot\,, x^{(i)}) - f}_\infty < \varepsilon $. Then
\begin{align*}
    \norm{\sum_{i=1}^n \alpha_i \ts k_a(\,\cdot\,, x^{(i)}) - f}_\infty
        &= \Bigg\lVert \sum_{i=1}^n \frac{\alpha_i}{d!} \sum_{\pi \in S_d} \sgn(\pi) \ts k\big(\pi(\,\cdot\,), x^{(i)}\big) - \frac{1}{d!} \sum_{\pi \in S_d} \sgn(\pi) f(\pi(\,\cdot\,)) \Bigg\rVert_\infty \\
        &= \Bigg\lVert \frac{1}{d!} \sum_{\pi \in S_d} \sgn(\pi) \underbrace{\left[\sum_{i=1}^n \alpha_i \ts k\big(\pi(\,\cdot\,), x^{(i)}\big) - f(\pi(\,\cdot\,))\right]}_{< \varepsilon ~\forall \pi} \Bigg\rVert_\infty < \varepsilon. \qedhere
\end{align*}
\end{proof}

Continuous antisymmetric functions can be approximated arbitrarily well by universal antisymmetric kernels such as the Gaussian kernel. Although we used the same number of data points for the approximation in the proof (i.e., $ n $ points for the expansion in terms of $ k $ and also $ k_a $), fewer data points are required in practice if we employ the antisymmetric kernel, see Example~\ref{ex:antisymmetric KRR}.

\subsection{Antisymmetric polynomial kernels}
\label{ssec:Antisymmetric polynomial kernels}

We have seen in Example~\ref{ex:antisymmetrized polynomial kernel} that the feature space dimension of the polynomial kernel of order two for $  \mathbb{X} \subset \R^2 $ is reduced from six to two by the antisymmetrization. Let $ q = (q_1, \dots, q_d) \in \mathbb{N}_0^d $ be a multi-index and $ \abs{q} = \sum_{i=1}^{d} q_i $. We define $ x^q = \prod_{i=1}^d x_i^{q_i} $. For a $ d $-dimensional state space $ \mathbb{X} $, the polynomial kernel of order $ p $ is then given by
\begin{equation*}
  k(x, x^\prime)
    = (c + x^\top x^\prime)^p
    = \sum_{0 \le \abs{q} \le p} \left( \sqrt{a_{q}} \ts x^q \right) \left( \sqrt{a_q} \ts {x^\prime}^q \right),
\end{equation*}
where
\begin{equation*}
  a_q = \binom{p}{q} \frac{c^{q_0}}{q_0!}
\end{equation*}
and $ q_0 = p - \abs{q} $, cf.~\cite{Zaki14}. The multinomial coefficients are defined by
\begin{equation*}
    \binom{p}{q} = \frac{p!}{q_1! \dots q_d!}.
\end{equation*}
Thus, the feature space is spanned by the monomials $ \big\{ x^q ~\big|~ 0 \le \abs{q} \le p \big\} $ and the dimension of the feature space is $ n_\phi = \binom{p+d}{d} $, see, e.g., \cite{Shawe-Taylor04}.

We now want to find the feature space of the corresponding antisymmetric kernel $ k_a $. Given a multi-index $ q $, assume that there exist two entries $ q_i $ and $ q_j $ with $ q_i = q_j $. Since the transposition $ (i, j) $ leaves the multi-index (and thus $ x^q $) unchanged, this monomial will be eliminated by the antisymmetrization operator. It follows that the monomials must have distinct indices. In fact, the nonzero images of monomials under antisymmetrization are of the form
\begin{equation} \label{eq:antisymmetric polynomial}
    f_\mu(x) =
    \begin{vmatrix}
        x_1^{\delta_1+\mu_1} & x_1^{\delta_2+\mu_2} & \dots & x_1^{\delta_d+\mu_d} \\
        x_2^{\delta_1+\mu_1} & x_2^{\delta_2+\mu_2} & \dots & x_2^{\delta_d+\mu_d} \\
        \vdots & \vdots & \ddots & \vdots \\
        x_d^{\delta_1+\mu_1} & x_d^{\delta_2+\mu_2} & \dots & x_d^{\delta_d+\mu_d}
    \end{vmatrix},
\end{equation}
where $ \delta = (d-1, d-2, \dots, 0) $ and $ \mu = (\mu_1, \dots, \mu_d) $ with $\mu_1 \geq \mu_2 \geq \dots \geq \mu_d \geq 0$ is a partition of a positive integer,\!\footnote{A \emph{partition} of a positive integer $ n $ is a decomposition into positive integers so that the sum is $ n $. The order of the summands does not matter, i.e., $ 6 = 3 + 2 + 1 $ and $ 6 = 1 + 2 + 3 $ are the same partition. We sort partitions in non-increasing order, e.g., $ \mu = (3, 2, 1) $ is a partition of $ 6 $ into three parts.} see \cite{Sturmfels08}. The degrees of the terms of this antisymmetric polynomial are $ \abs{\mu} + \binom{d}{2} $. Since we need all monomials of order $ 0 \le \abs{q} \le p $, we have to consider the partitions $ \mu $ of $ 0 \le p_r \le p - \binom{d}{2} $.

This representation uses the fact that multi-indices corresponding to antisymmetric polynomials can be written as $ q = \delta + \mu $, where $ \delta $ is defined as above and $ \mu $ a partition. It follows that an antisymmetric polynomial must be at least of order $ \binom{d}{2} $. Equation~\eqref{eq:antisymmetric polynomial} can be regarded as a Slater determinant (introduced below) for a specific set of functions. We also obtain the Vandermonde determinant (up to the sign) as a special case where $ \mu = 0 $.

\begin{definition}[Partition function]
Let $ s_\ell(n) $ be the function that counts the partitions of $ n $ into exactly $ \ell $ parts.
\end{definition}

A closed-form expressions for $ s_\ell(n) $ is not known, but it can be expressed in terms of generating functions or computed using the recurrence relation
\begin{equation*}
    s_\ell(n) = s_\ell(n-\ell) + s_{\ell-1}(n-1),
\end{equation*}
where we define $ s_\ell(n) = 1 $ if $ n = 0 $ and $ \ell = 0 $ and $ s_\ell(n) = 0 $ if $ n \le 0 $ or $ \ell \le 0 $ (but not $ n = \ell = 0 $), see \cite{Stanley11} for more details about partitions and partition functions.

\begin{lemma}
The dimension of the feature space generated by the antisymmetrized polynomial kernel of order $ p $ is
\begin{equation*}
    n_{\phi_a} = \sum_{p_r=0}^{p-\binom{d}{2}} \sum_{j=0}^d s_j(p_r).
\end{equation*}
\end{lemma}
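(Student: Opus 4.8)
The plan is to exhibit an explicit basis of the feature space of $ k_a $ and then count it. As observed above, the feature space of $ k_a $ is spanned by the antisymmetrizations $ \{ \mathcal{A}(x^q) \mid 0 \le \abs{q} \le p \} $ of the monomial features of the ordinary polynomial kernel of order~$ p $, so it suffices to determine $ \dim \mspan\{ \mathcal{A}(x^q) \mid 0 \le \abs{q} \le p \} $ as a space of functions on $ \mathbb{X} $. Expanding the definition of $ \mathcal{A} $ yields $ \mathcal{A}(x^q) = \frac{1}{d!} \det\big( x_i^{q_j} \big)_{i,j=1}^d $, from which two facts are immediate: $ \mathcal{A}(x^q) = 0 $ whenever two entries of $ q $ coincide (two equal columns), and, when the entries of $ q $ are pairwise distinct, $ \mathcal{A}(x^q) = \pm \frac{1}{d!} f_\mu $, where $ f_\mu $ is the polynomial in~\eqref{eq:antisymmetric polynomial} associated with $ \mu = ( q_{(j)} - \delta_j )_{j=1}^d $; here $ \delta = (d-1, \dots, 1, 0) $ and $ q_{(1)} > q_{(2)} > \dots > q_{(d)} \ge 0 $ are the entries of $ q $ sorted in decreasing order. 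The strict inequalities $ q_{(j)} > q_{(j+1)} $ are equivalent to $ \mu_j \ge \mu_{j+1} \ge 0 $, so $ \mu $ is a partition into at most $ d $ parts, with $ \abs{\mu} = \abs{q} - \abs{\delta} = \abs{q} - \binom{d}{2} $; conversely every such $ \mu $ arises from the strictly decreasing tuple $ q = \delta + \mu $ (and from its $ S_d $-orbit, which only flips the sign of $ \mathcal{A}(x^q) $), and the constraint $ \abs{q} \le p $ becomes $ \abs{\mu} \le p - \binom{d}{2} $.

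Hence $ \mspan\{ \mathcal{A}(x^q) \mid 0 \le \abs{q} \le p \} = \mspan\{ f_\mu \mid \mu \text{ a partition into at most } d \text{ parts with } \abs{\mu} \le p - \binom{d}{2} \} $, and I would next check that these $ f_\mu $ are linearly independent. Indeed $ f_\mu $ is a signed sum of the $ d! $ pairwise distinct monomials $ x^{\pi(\delta+\mu)} $, $ \pi \in S_d $, and the multiset of exponents of any of these monomials is $ \{ \delta_j + \mu_j \}_j $, which determines $ \mu $; thus for distinct $ \mu $ the polynomials $ f_\mu $ involve disjoint sets of monomials, and each is nonzero since $ x^{\delta + \mu} $ occurs in it with coefficient $ \pm 1 $. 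So $ \{ f_\mu \} $ is a basis of the feature space of $ k_a $ and $ n_{\phi_a} $ equals the number of partitions into at most $ d $ parts of size at most $ p - \binom{d}{2} $. Sorting such partitions first by their size $ p_r = \abs{\mu} \in \{ 0, 1, \dots, p - \binom{d}{2} \} $ and then by their number of (nonzero) parts $ j \in \{ 0, 1, \dots, d \} $ --- the number with exactly $ j $ parts being $ s_j(p_r) $ by definition of the partition function, and $ s_0(0) = 1 $ accounting for the empty partition, i.e.\ the Vandermonde determinant --- gives $ n_{\phi_a} = \sum_{p_r=0}^{p - \binom{d}{2}} \sum_{j=0}^{d} s_j(p_r) $, as claimed.

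I do not anticipate a genuinely difficult step. The points deserving care are the reindexing in the first paragraph --- in particular the degree shift by $ \abs{\delta} = \binom{d}{2} $, which also explains why a nonzero antisymmetric polynomial must have degree at least $ \binom{d}{2} $ --- and the linear independence of the $ f_\mu $, which comes down to the disjointness of their monomial supports. The remainder is bookkeeping with partition functions; note that when $ p < \binom{d}{2} $ the outer sum is empty, correctly yielding $ n_{\phi_a} = 0 $.
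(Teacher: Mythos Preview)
Your proof is correct and follows essentially the same route as the paper's: both identify the nonzero antisymmetrized monomials with the determinants $f_\mu$ indexed by partitions $\mu$ into at most $d$ parts with $\abs{\mu}\le p-\binom{d}{2}$, and then count those partitions. Your version is in fact more complete than the paper's brief argument, since you explicitly verify the linear independence of the $f_\mu$ via disjointness of their monomial supports, a point the paper takes for granted.
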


\begin{proof}
Since $ \binom{d}{2} $ of the $ \abs{q} $ exponents are already spoken for, we can use only the remaining $ \abs{q} - \binom{d}{2} $ to generate partitions $ \mu $, with $ 0 \le \abs{q} \le p $. All these numbers can be decomposed into at most $ d $ parts since we have only $ d $ variables. If the number of components is smaller than $ d $, we simply add zeros.
\end{proof}

\begin{example}
For $ d = 3 $ and $ p = 6 $, the base case is $ \delta = (2, 1, 0) $, and we can generate partitions for
\begin{align*}
    p_r = 0: &~ \mu = (0, 0, 0), && q = (2, 1, 0), \\
    p_r = 1: &~ \mu = (1, 0, 0), && q = (3, 1, 0), \\
    p_r = 2: &~ \mu = (2, 0, 0), (1, 1, 0), && q = (4, 1, 0), (3, 2, 0), \\
    p_r = 3: &~ \mu = (3, 0, 0), (2, 1, 0), (1, 1, 1), && q = (5, 1, 0), (4, 2, 0), (3, 2, 1),
\end{align*}
resulting in $ 7 $ antisymmetric polynomials. \exampleSymbol
\end{example}

\begin{table}
\centering
\caption{Dimensions of the feature spaces spanned by the polynomial kernel $ k $ and its antisymmetric counterpart $ k_a $. Here, $ d $ is the dimension of the state space and $ p $ the degree of the polynomial kernel. \\[-0.5ex]}
\label{tab:antisymmetric polynomial kernel}
\scalebox{0.9}{
\begin{tabular}{c|cc|cc|cc|cc|cc|cc|cc}
 \backslashbox{$d$}{$p$} & \multicolumn{2}{c|}{2} & \multicolumn{2}{c|}{3} & \multicolumn{2}{c|}{4} & \multicolumn{2}{c|}{5} & \multicolumn{2}{c|}{6} & \multicolumn{2}{c|}{7} & \multicolumn{2}{c}{8} \\ \hline
& $ n_\phi $ & $ n_{\phi_a} $ & $ n_\phi $ & $ n_{\phi_a} $ & $ n_\phi $ & $ n_{\phi_a} $ & $ n_\phi $ & $ n_{\phi_a} $ & $ n_\phi $ & $ n_{\phi_a} $ & $ n_\phi $ & $ n_{\phi_a} $ & $ n_\phi $ & $ n_{\phi_a} $ \\ \hline
2 & 6 & 2 & 10 & 4 & 15 & 6 & 21 & 9 & 28 & 12 & 36 & 16 & 45 & 20 \\
3 & 10 & 0 & 20 & 1 & 35 & 2 & 56 & 4 & 84 & 7 & 120 & 11 & 165 & 16 \\
4 & 15 & 0 & 35 & 0 & 70 & 0 & 126 & 0 & 210 & 1 & 330 & 2 & 495 & 4
\end{tabular}}
\end{table}

The sizes of the feature spaces of the polynomial kernels $ k $ and $ k_a $ for different dimensions $ d $ and degrees $ p $ are summarized in Table~\ref{tab:antisymmetric polynomial kernel}. This shows that antisymmetric polynomial kernels might not be feasible for higher-dimensional problems. For $ d = 10 $, for example, the lowest degree of the monomials is already $ 45 $.

\subsection{Antisymmetric Gaussian kernels}
\label{ssec:Antisymmetric Gaussian kernels}

We will now analyze the properties of the Gaussian kernel. We have shown in Proposition~\ref{pro:universal antisymmetric kernel} that the space spanned by the antisymmetric Gaussian kernel is dense in the space of continuous antisymmetric functions. For the Gaussian kernel, the expression obtained in Lemma~\ref{lem:permutation invariant asymmetric kernel} can be simplified even further.

\begin{lemma} \label{lem:Gaussian Slater kernel}
Let $ k $ be the Gaussian kernel with bandwidth $ \sigma $, then
\begin{equation*}
    k_a(x, x^\prime) = \frac{1}{d!} \ts
    \begin{vmatrix}
        e^{-\frac{(x_1 - x^\prime_1)^2}{2 \ts \sigma^2}} & \dots & e^{-\frac{(x_1 - x^\prime_d)^2}{2 \ts \sigma^2}} \\
        \vdots & \ddots & \vdots \\
        e^{-\frac{(x_d - x^\prime_1)^2}{2 \ts \sigma^2}} & \dots & e^{-\frac{(x_d - x^\prime_d)^2}{2 \ts \sigma^2}} \\
    \end{vmatrix}.
\end{equation*}
\end{lemma}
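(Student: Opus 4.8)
The plan is to start from the single-sum representation of $k_a$ that is available because the Gaussian kernel is permutation-invariant. By Lemma~\ref{lem:permutation invariant asymmetric kernel},
\begin{equation*}
    k_a(x, x^\prime) = \frac{1}{d!} \sum_{\pi \in S_d} \sgn(\pi) \ts k\big(\pi(x), x^\prime\big).
\end{equation*}
The key structural fact I would exploit next is that the Gaussian kernel factorizes coordinatewise: since $\norm{\pi(x) - x^\prime}^2 = \sum_{i=1}^d (x_{\pi(i)} - x^\prime_i)^2$, we have
\begin{equation*}
    k\big(\pi(x), x^\prime\big) = \exp\!\left(-\frac{\norm{\pi(x) - x^\prime}^2}{2 \ts \sigma^2}\right) = \prod_{i=1}^d \exp\!\left(-\frac{(x_{\pi(i)} - x^\prime_i)^2}{2 \ts \sigma^2}\right).
\end{equation*}

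Substituting this into the sum gives
\begin{equation*}
    k_a(x, x^\prime) = \frac{1}{d!} \sum_{\pi \in S_d} \sgn(\pi) \prod_{i=1}^d \exp\!\left(-\frac{(x_{\pi(i)} - x^\prime_i)^2}{2 \ts \sigma^2}\right),
\end{equation*}
and I would then recognize the inner expression as the Leibniz expansion of a determinant. Concretely, setting $M_{ij} = \exp\!\big(-(x_i - x^\prime_j)^2 / (2\sigma^2)\big)$, the Leibniz formula $\det M = \sum_{\pi \in S_d} \sgn(\pi) \prod_{i=1}^d M_{\pi(i), i}$ matches the sum above term by term (using that $\sgn(\pi) = \sgn(\pi^{-1})$ if one prefers the other indexing convention). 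Hence $k_a(x, x^\prime) = \frac{1}{d!} \det M$, which is exactly the claimed determinantal formula.

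There is essentially no serious obstacle here: the whole argument is the combination of two standard facts — the product structure of the Gaussian and the Leibniz expansion of the determinant. The only point that deserves a moment of care is bookkeeping the permutation indices so that the sign and the ordering of factors in the product agree with the chosen row/column convention in the determinant; this is handled by the identity $\sum_{\pi} g(\pi) = \sum_{\pi} g(\pi^{-1})$ together with $\sgn(\pi) = \sgn(\pi^{-1})$, which were already used repeatedly in the preceding derivations. I would close by remarking that this makes $k_a$ a Slater determinant built from the one-dimensional Gaussian "orbitals" $x \mapsto \exp\!\big(-(x - x^\prime_j)^2/(2\sigma^2)\big)$, which is what enables the efficient evaluation advertised in the introduction.
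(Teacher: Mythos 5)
Your proof is correct and is essentially the paper's argument run in the opposite direction: the paper expands the determinant via Leibniz' formula to recover $\sum_{\pi}\sgn(\pi)\,k(\pi(x),x^\prime)$ and then invokes Lemma~\ref{lem:permutation invariant asymmetric kernel}, while you start from that lemma, use the coordinatewise factorization of the Gaussian, and recognize the Leibniz expansion. Same two ingredients, same proof.
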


\begin{proof}
Applying Leibniz' formula
\begin{equation*}
    \det(A) = \sum_{\pi \in S_d} \sgn(\pi) \prod_{i=1}^d a_{\pi(i), i},
\end{equation*}
we have
\begin{align*}
    \begin{vmatrix}
        e^{-\frac{(x_1 - x^\prime_1)^2}{2 \ts \sigma^2}} & \dots & e^{-\frac{(x_1 - x^\prime_d)^2}{2 \ts \sigma^2}} \\
        \vdots & \ddots & \vdots \\
        e^{-\frac{(x_d - x^\prime_1)^2}{2 \ts \sigma^2}} & \dots & e^{-\frac{(x_d - x^\prime_d)^2}{2 \ts \sigma^2}} \\
    \end{vmatrix}
        &= \sum_{\pi \in S_d} \sgn(\pi) \prod_{i=1}^d e^{-\frac{(x_{\pi(i)} - x^\prime_i)^2}{2 \ts \sigma^2}} \\
        &= \sum_{\pi \in S_d} \sgn(\pi) \ts e^{-\frac{\sum_{i=1}^d (x_{\pi(i)} - x^\prime_i)^2}{2 \ts \sigma^2}} \\
        &= \sum_{\pi \in S_d} \sgn(\pi) \ts e^{-\frac{\norm{\pi(x) - x^\prime}^2}{2 \ts \sigma^2}} \\
        &= \sum_{\pi \in S_d} \sgn(\pi) \ts k(\pi(x), x^\prime).
\end{align*}
Lemma~\ref{lem:permutation invariant asymmetric kernel} then yields the desired result.
\end{proof}

This decomposition is akin to the well-known Slater determinant (see, e.g., \cite{Foldy62}), which defines an antisymmetric wave function by
\begin{equation*}
    \psi_a(x) = \frac{1}{\sqrt{d!}}
    \begin{vmatrix}
        \psi_1(x_1) & \dots & \psi_d(x_1) \\
        \vdots & \ddots & \vdots \\
        \psi_1(x_d) & \dots & \psi_d(x_d) \\
    \end{vmatrix}.
\end{equation*}
Notice that here the normalization factor is chosen in such a way that, provided the wave functions $ \psi_i $, $ i = 1, \dots, d $, are normalized and orthogonal to each other, $ \psi_a $ is normalized as well.

\begin{remark}
We can define a more general class of antisymmetric kernels. Let $ f \colon \R \to \R $ be a function, then
\begin{equation*}
    k_a(x, x^\prime) = \frac{1}{d!} \ts
    \begin{vmatrix}
        f(\abs{x_1 - x^\prime_1}) & \dots & f(\abs{x_1 - x^\prime_d}) \\
        \vdots & \ddots & \vdots \\
        f(\abs{x_d - x^\prime_1}) & \dots & f(\abs{x_d - x^\prime_d}) \\
    \end{vmatrix}
\end{equation*}
defines an antisymmetric kernel. We call such a function $ k_a $ a \emph{Slater kernel}. The Gaussian kernel can be obtained by setting $ f(r) = e^{-\frac{r^2}{2 \ts \sigma^2}} $ and the Laplacian kernel---using the 1-norm---by setting $ f(r) = e^{-\frac{r}{\sigma}} $. Alternatively, kernels based on generalized Slater determinants could be constructed or by concatenating creation and annihilation operators, see also \cite{Pfau20, Hermann20, Choo20}.
\end{remark}

The advantage of the Slater determinant formulation is that we can compute it efficiently using matrix decomposition techniques, without having to iterate over all permutations, which would be clearly infeasible for higher-dimensional problems.

\begin{example} \label{ex:antisymmetric KRR}
In order to illustrate the difference between a standard Gaussian kernel~$ k $ and its antisymmetrized counterpart~$ k_a $, we define an antisymmetric function $ f \colon \R^2 \to \R $ by $ f(x) = \sin(\boldsymbol{\pi}(x_1 - x_2)) $ and apply kernel ridge regression (see, e.g., \cite{Shawe-Taylor04}) to randomly sampled data points.\!\footnote{We use a bold $\boldsymbol{\pi}$ for the mathematical constant to avoid confusion with permutations $ \pi $.} That is, we generate $ m $ data points $ x^{(i)} $ in $ \mathbb{X} = [-1, 1] \times [-1, 1] $ and compute $ y^{(i)} = f(x^{(i)}) $. We then try to recover $ f $ from the training data $ \big\{(x^{(i)}, y^{(i)})\big\}_{i=1}^m $. Additionally, we define an augmented data set of size $ 2 \ts m $ by adding the antisymmetrized data set, i.e., $ \big\{(x^{(i)}, y^{(i)})\big\}_{i=1}^m \cup \big\{(\pi(x^{(i)}), -y^{(i)}\big\}_{i=1}^m $, where $ \pi = (1, 2) $ in cycle notation. The bandwidth of the kernel is set to $ \sigma = \frac{1}{2} $. The results are shown in Figure~\ref{fig:antisymmetric function}. We measure the root-mean-square error (RMSE)---averaged over 5000 runs---in the midpoints of a regular $ 30 \times 30 $ box discretization of the domain. Kernel ridge regression using $ k_a $ results in more accurate function approximations and is, for small $ m $, numerically equivalent to kernel ridge regression using $ k $ applied to the augmented data set of size $ 2 \ts m $. For larger values of $ m $, doubling the size of the data set leads to ill-conditioned matrices and increased numerical errors.\!\footnote{This could be mitigated by decreasing the bandwidth or by regularization techniques.} \exampleSymbol

\begin{figure}
    \centering
    \begin{minipage}{0.45\textwidth}
        \centering
        \subfiguretitle{(a)}
        \includegraphics[width=0.9\textwidth]{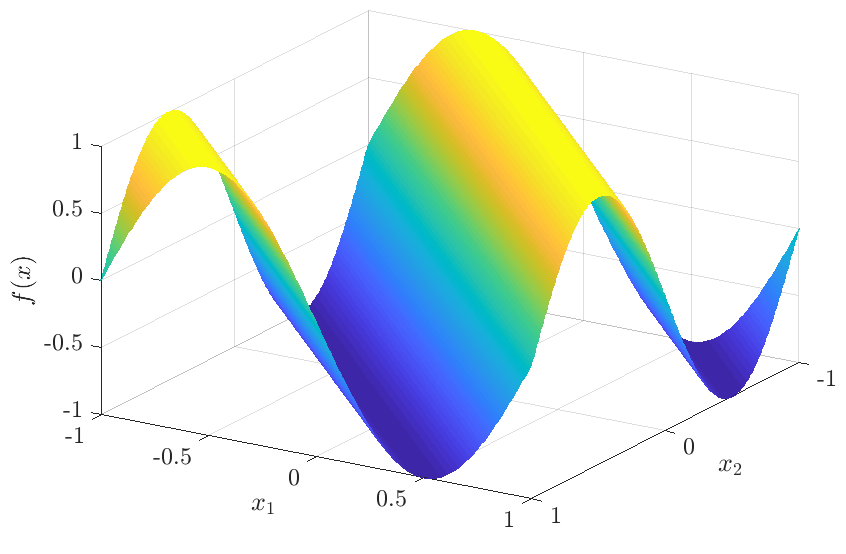}
    \end{minipage}
    \begin{minipage}{0.45\textwidth}
        \centering
        \subfiguretitle{(b)}
        \includegraphics[width=0.9\textwidth]{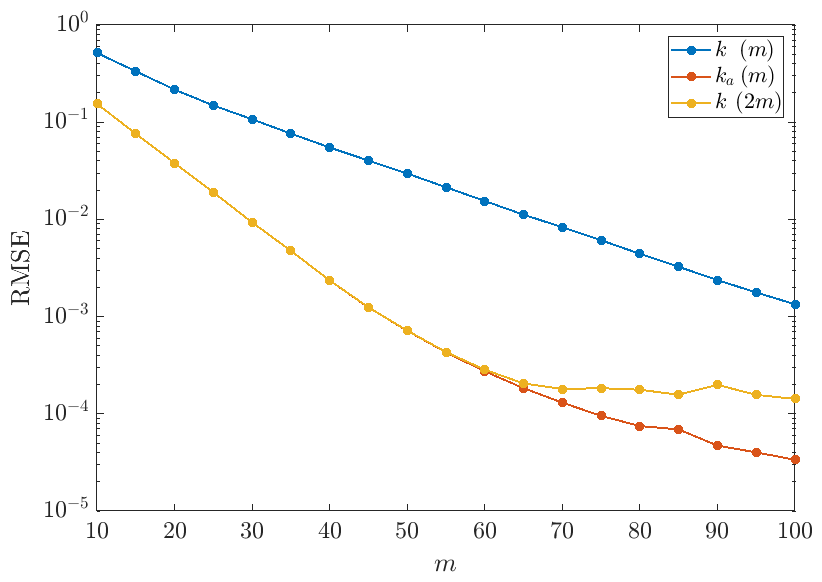}
    \end{minipage}
    \caption{(a) Antisymmetric function $ f(x) = \sin(\boldsymbol{\pi}(x_1 - x_2)) $. (b) Kernel ridge regression approximation error as a function of the number of data points. The antisymmetric Gaussian kernel leads to more accurate function approximations without increasing the size of the training data set.}
    \label{fig:antisymmetric function}
\end{figure}

\end{example}

The example shows that the antisymmetrized kernel is indeed advantageous, it enables a more accurate representation without increasing the size of the data set. For higher-dimensional problems, this effect will be even more pronounced. To obtain the same accuracy for a three-dimensional antisymmetric function, we would already need $ 6 \ts m $ data points. The kernel evaluations, on the other hand, become more expensive, but are easily parallelizable. The bottleneck of kernel-based methods is often the size of the training data set, which enters in a cubic way (since a generally dense system of linear equations has to be solved, or, if we are interested in eigenfunctions of operators associated with dynamical systems, a generalized eigenvalue problem).

\subsection{Derivatives of antisymmetric kernels}

For the approximation of differential operators, we will also need partial derivatives of the kernel $ k_a $. Since $ k_a $ just comprises alternating sums of kernel functions $ k $, we can compute derivatives of $ k_a $ by summing over derivatives of $ k $. For polynomial and Gaussian kernels, the derivatives of $ k $ can be found in~\cite{KNH20}. Alternatively, the partial derivatives of the antisymmetric Gaussian kernel can be computed via Slater determinants.

\begin{example}
For the antisymmetric Gaussian kernel, let $ K^{e_l} \in \R^{d \times d} $ be the matrix with entries
\begin{equation*}
    K^{e_l}_{i j} =
    \begin{cases}
        -\frac{1}{\sigma^2} (x_i - x^\prime_j) e^{-\frac{(x_i - x^\prime_j)^2}{2 \ts \sigma^2}}, & i = l, \\
        e^{-\frac{(x_i - x^\prime_j)^2}{2 \ts \sigma^2}}, & i \ne l.
    \end{cases}
\end{equation*}
Then
\begin{equation*}
    \mathcal{D}^{e_l} k_a(x, x^\prime) = \frac{1}{d!} \ts \det(K^{e_l}).
\end{equation*}
Similar formulas can be derived for the second-order derivatives. \exampleSymbol
\end{example}

\section{Symmetric kernels and their properties}
\label{sec:Symmetric kernels and their properties}

Although we focused on antisymmetric functions so far, symmetric functions also play an important role in quantum physics. Other typical applications include point clouds, sets, and graphs, where the numbering of points, elements, or vertices should not impair the learning algorithms. Some of the above results can be easily carried over to the symmetric case. The special case $ d_x = 2 $ is analyzed in \cite{Pahikkala15}. Similar symmetrized kernels are also constructed in \cite{Uteva17}. We focus on the analysis of the induced functions spaces.

\subsection{Symmetric kernels}

We call a function $ f \colon \mathbb{X} \to \R $ symmetric if
\begin{equation*}
    f(x) = f(\pi(x))
\end{equation*}
for all permutations $ \pi \in S_d $ and define the \emph{symmetrization operator}
\begin{equation*}
    (\mathcal{S} f)(x) = \frac{1}{d!} \sum_{\pi \in S_d} f(\pi(x)).
\end{equation*}

\begin{definition}[Symmetric kernel function]
Let $ k \colon \mathbb{X} \times \mathbb{X} \to \R $ be a kernel. We then define a symmetric function $ k_s \colon \mathbb{X} \times \mathbb{X} \to \R $ by
\begin{equation*}
    k_s(x, x^\prime) = \frac{1}{d!^2} \sum_{\pi \in S_d} \sum_{\pi^\prime \in S_d} k\big(\pi(x), \pi^\prime(x^\prime)\big).
\end{equation*}
\end{definition}

We simply omitted the signs of the permutations here. As before, if $ k(x, x^\prime) = k(x^\prime, x) $, then also $ k_s(x, x^\prime) = k_s(x^\prime, x) $. The function $ k_s $ is permutation-symmetric in both arguments. Note that the definition of permutation-symmetry is different from permutation-invariance, which was defined by $ k(x, x^\prime) = k(\pi(x), \pi(x^\prime)) $. Permutation-symmetric kernels are, however, automatically permutation-invariant. We briefly restate the above results for symmetric functions, the proofs are analogous to their counterparts for antisymmetric functions.

\begin{lemma}
The function $ k_s $ defines an s.p.d.\ kernel.
\end{lemma}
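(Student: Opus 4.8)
The plan is to mirror exactly the proof given for the antisymmetric case (Lemma preceding this one), since the only difference is the absence of the permutation signs. First I would exhibit $k_s$ as an inner product of a symmetrized feature map. Starting from the definition and inserting $k(\pi(x),\pi'(x')) = \innerprod{\phi(\pi(x))}{\phi(\pi'(x'))}$, bilinearity of the inner product lets me pull the two sums inside, giving
\[
    k_s(x, x^\prime) = \innerprod{\phi_s(x)}{\phi_s(x^\prime)}, \qquad \phi_s(x) = \frac{1}{d!} \sum_{\pi \in S_d} \phi(\pi(x)).
\]
This shows $k_s$ is a kernel with feature map $\phi_s = \mathcal{S}\phi$. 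Symmetry in the sense $k_s(x,x') = k_s(x',x)$ was already noted in the text preceding the statement.

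Next I would verify positive definiteness exactly as before: for any coefficient vector $c \in \R^m$ and points $\{x^{(i)}\}_{i=1}^m \subset \mathbb{X}$,
\[
    c^\top G_s \ts c = \sum_{i,j} c_i \ts c_j \innerprod{\phi_s(x^{(i)})}{\phi_s(x^{(j)})} = \norm{\sum_{i=1}^m c_i \ts \phi_s(x^{(i)})}^2 \ge 0.
\]
For strict positive definiteness one argues that if this vanishes for mutually distinct data points then $\sum_i c_i \phi_s(x^{(i)}) = 0$; using that $k$ itself is s.p.d.\ (hence the $\phi(\pi(x^{(i)}))$ are suitably independent) together with the fact that the orbits of distinct points under $S_d$ are handled as in the antisymmetric lemma, one concludes $c = 0$.

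There is really no obstacle here — the argument is a verbatim transcription of the antisymmetric proof with all $\sgn(\pi)$ factors replaced by $1$, and indeed the paper explicitly says "the proofs are analogous to their counterparts for antisymmetric functions." The only point deserving a word of care is the strict positive definiteness claim, since distinct points $x^{(i)}$ can have overlapping $S_d$-orbits only if one is a permutation of another, and in that degenerate case $\phi_s(x^{(i)}) = \phi_s(x^{(j)})$; but this is excluded once we demand the reduced data points (orbit representatives) be distinct, which is the natural reading of "mutually distinct" for a permutation-invariant kernel. So the proof is essentially: "Replace $\sgn(\pi)\sgn(\pi')$ by $1$ in the proof of the antisymmetric lemma."
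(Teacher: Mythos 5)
Your proof is correct and matches the paper's intended argument: the paper omits the proof, stating only that it is analogous to the antisymmetric lemma, and your transcription with all $\sgn(\pi)$ factors replaced by $1$ (feature map $\phi_s = \mathcal{S}\phi$, positive definiteness via $c^\top G_s\, c = \bigl\lVert \sum_i c_i\, \phi_s(x^{(i)}) \bigr\rVert^2 \ge 0$) is exactly that argument. One caveat: in this paper ``s.p.d.'' abbreviates \emph{symmetric and positive definite} (the $\ge 0$ condition), so your closing discussion of strict positive definiteness is not needed --- and in fact $k_s$ is not strictly positive definite for arbitrary mutually distinct points, since distinct points in the same $S_d$-orbit yield identical feature vectors $\phi_s$ and hence a singular Gram matrix.
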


\begin{example} \label{ex:symmetrized polynomial kernel}
For $ \mathbb{X} \subset \R^2 $, the feature space of the symmetrized polynomial kernel of order 2 is spanned by the symmetric functions $ \{ 1, x_1 + x_2, x_1^2 + x_2^2, x_1 \ts x_2 \} $. \exampleSymbol
\end{example}

More general results for polynomial kernels will be derived in Section~\ref{ssec:Symmetric polynomial kernels}. Eigenfunctions of the integral operator associated with $ k_s $ are symmetric. Mercer features of the symmetrized Gaussian kernel for $ d = 2 $ are shown in Figure~\ref{fig:Mercer}.

\begin{lemma} \label{lem:permutation invariant symmetric kernel}
Given a permutation-invariant kernel $ k $, it holds that
\begin{equation*}
    k_s(x, x^\prime) = \frac{1}{d!} \sum_{\pi \in S_d} k\big(\pi(x), x^\prime\big) = \frac{1}{d!} \sum_{\pi \in S_d} k\big(x, \pi(x^\prime)\big).
\end{equation*}
\end{lemma}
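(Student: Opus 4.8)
The plan is to mimic verbatim the proof of Lemma~\ref{lem:permutation invariant asymmetric kernel}, simply dropping every occurrence of the sign factor $\sgn(\pi)$. Starting from the definition of $k_s$, I would use permutation-invariance of $k$ in the form $k(\pi(x), \pi^\prime(x^\prime)) = k\big((\pi^\prime)^{-1}\pi(x),\, (\pi^\prime)^{-1}\pi^\prime(x^\prime)\big) = k\big([(\pi^\prime)^{-1}\circ\pi](x),\, x^\prime\big)$, so that the double sum becomes
\begin{equation*}
    k_s(x, x^\prime) = \frac{1}{d!^2} \sum_{\pi \in S_d} \sum_{\pi^\prime \in S_d} k\big([(\pi^\prime)^{-1}\circ\pi](x),\, x^\prime\big).
\end{equation*}

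Next I would observe that for each fixed $\pi^\prime \in S_d$, the map $\pi \mapsto (\pi^\prime)^{-1}\circ\pi$ is a bijection of $S_d$ onto itself, so the inner sum equals $\sum_{\sigma \in S_d} k(\sigma(x), x^\prime)$, a quantity independent of $\pi^\prime$. Summing the $d!$ identical copies over $\pi^\prime$ and cancelling one factor of $d!$ yields $k_s(x, x^\prime) = \frac{1}{d!}\sum_{\sigma \in S_d} k(\sigma(x), x^\prime)$, which is the first claimed identity. The second identity, $k_s(x, x^\prime) = \frac{1}{d!}\sum_{\pi \in S_d} k(x, \pi(x^\prime))$, follows by the symmetric argument (collapsing the $\pi$-sum instead of the $\pi^\prime$-sum, using $k(\pi(x),\pi^\prime(x^\prime)) = k(x, \pi^{-1}\pi^\prime(x^\prime))$), or alternatively by combining the first identity with the fact that $k_s(x,x^\prime) = k_s(x^\prime,x)$ together with the symmetry $k(x,x^\prime)=k(x^\prime,x)$.

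There is no genuine obstacle here; the only point requiring a word of care is the reindexing of the double sum and the bookkeeping of multiplicities ("all permutations occur $d!$ times"), exactly as in the antisymmetric case. I would therefore present the argument in one short display chain and note explicitly that it is the sign-free specialization of Lemma~\ref{lem:permutation invariant asymmetric kernel}.
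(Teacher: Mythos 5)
Your argument is correct and coincides with the paper's: the paper proves the antisymmetric case (Lemma~\ref{lem:permutation invariant asymmetric kernel}) by exactly this reindexing of the double sum via permutation-invariance and then states that the symmetric version follows analogously, which is precisely your sign-free specialization. The bijection argument and the handling of multiplicities are sound, so there is nothing to add.
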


Analogously, continuous symmetric functions can be approximated arbitrarily well by symmetric universal kernels.

\begin{proposition}
Let $ \mathbb{X} $ be bounded. Given a universal, permutation-invariant, continuous kernel $ k $, the space $ \mathbb{H}_s $ induced by $ k_s $ is dense in the space of continuous symmetric functions given by $ C_s(\mathbb{X}) = \{ f \in C(\mathbb{X}) \mid f \text{ is symmetric} \} $.
\end{proposition}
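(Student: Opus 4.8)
The plan is to mirror the proof of Proposition~\ref{pro:universal antisymmetric kernel} essentially verbatim, with the sign factors $\sgn(\pi)$ removed throughout. First I would take an arbitrary $f \in C_s(\mathbb{X})$. Since $f$ is symmetric we have $f(x) = f(\pi(x))$ for every $\pi \in S_d$, and averaging over the group gives the trivial but useful identity
\begin{equation*}
    f(x) = \frac{1}{d!} \sum_{\pi \in S_d} f(\pi(x)).
\end{equation*}

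Next, because $k$ is universal and $\mathbb{X}$ is bounded (hence its closure is compact, so $C(\mathbb{X})$ with $\norm{\,\cdot\,}_\infty$ is the relevant space), for any $\varepsilon > 0$ I can choose coefficients $\alpha_i \in \R$ and points $x^{(i)} \in \mathbb{X}$, $i = 1,\dots,n$, such that $\norm{\sum_{i=1}^n \alpha_i \ts k(\,\cdot\,, x^{(i)}) - f}_\infty < \varepsilon$. Using the permutation-invariance of $k$ together with Lemma~\ref{lem:permutation invariant symmetric kernel}, I would then estimate
\begin{align*}
    \norm{\sum_{i=1}^n \alpha_i \ts k_s(\,\cdot\,, x^{(i)}) - f}_\infty
        &= \Bigg\lVert \frac{1}{d!} \sum_{\pi \in S_d} \left[\sum_{i=1}^n \alpha_i \ts k\big(\pi(\,\cdot\,), x^{(i)}\big) - f(\pi(\,\cdot\,))\right] \Bigg\rVert_\infty \\
        &\le \frac{1}{d!} \sum_{\pi \in S_d} \norm{\sum_{i=1}^n \alpha_i \ts k\big(\pi(\,\cdot\,), x^{(i)}\big) - f(\pi(\,\cdot\,))}_\infty < \varepsilon,
\end{align*}
where each summand is bounded by $\varepsilon$ because precomposition with the bijection $\pi$ does not change the sup norm over $\mathbb{X}$ (here one uses that $\mathbb{X}$ is permutation-invariant as a set, which is implicit since $\pi(x)$ must lie in $\mathbb{X}$ for the expressions to make sense). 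Since $\sum_{i=1}^n \alpha_i \ts k_s(\,\cdot\,, x^{(i)}) \in \mathbb{H}_s$ and $\varepsilon$ was arbitrary, $\mathbb{H}_s$ is dense in $C_s(\mathbb{X})$.

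Honestly there is no real obstacle here — the argument is purely a transcription of the antisymmetric case, and the only point that deserves a moment's care is the same one that was glossed in Proposition~\ref{pro:universal antisymmetric kernel}: one should check that the state space is closed under the permutation action (so that $k(\pi(\cdot), x^{(i)})$ and $f(\pi(\cdot))$ are well-defined on all of $\mathbb{X}$) and that $f$ being symmetric on $\mathbb{X}$ is compatible with this action. Given the setup in Section~\ref{sec:Antisymmetric kernels}, this is part of the standing assumptions, so I would simply note that the proof is word-for-word that of Proposition~\ref{pro:universal antisymmetric kernel} with $\sgn(\pi)$ replaced by $1$ everywhere, and leave it at that.
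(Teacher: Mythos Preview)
Your proposal is correct and is precisely what the paper intends: it states explicitly that ``the proofs are analogous to their counterparts for antisymmetric functions'' and does not spell out this one. Your transcription of Proposition~\ref{pro:universal antisymmetric kernel} with $\sgn(\pi)$ replaced by $1$ is exactly that analogue.
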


\subsection{Symmetric polynomial kernels}
\label{ssec:Symmetric polynomial kernels}

Let us compute the dimensions of the feature spaces spanned by symmetrized polynomial kernels.

\begin{lemma}
The dimension of the feature space generated by the symmetrized polynomial kernel of order $ p $ is
\begin{equation*}
    n_{\phi_s} = \sum_{p_r=0}^{p} \sum_{j=0}^d s_j(p_r).
\end{equation*}
\end{lemma}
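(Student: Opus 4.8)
The plan is to mirror the argument used for the antisymmetric polynomial kernel in Section~\ref{ssec:Antisymmetric polynomial kernels}, with the key simplification that symmetrization---unlike antisymmetrization---never annihilates a monomial, so no $ \binom{d}{2} $ base offset appears. First I would recall that the canonical feature map of the order-$ p $ polynomial kernel $ k $ has components $ \sqrt{a_q} \ts x^q $ indexed by multi-indices $ q \in \mathbb{N}_0^d $ with $ 0 \le \abs{q} \le p $, and that, as noted after the definition of $ k_s $, the canonical feature map of $ k_s $ is $ \mathcal{S} $ applied to that of $ k $. Hence the span of the feature components of $ k_s $ equals $ \mspan\{ \mathcal{S}(x^q) \mid 0 \le \abs{q} \le p \} $.

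Next I would observe that $ \mathcal{S}(x^q) = \frac{1}{d!} \sum_{\pi \in S_d} x^{\pi(q)} $ depends only on the multiset of exponents of $ q $, i.e., only on the partition $ \lambda = \lambda(q) $ obtained by sorting the entries of $ q $ in non-increasing order (padded with zeros to length $ d $). In fact $ \mathcal{S}(x^q) = \frac{\abs{\operatorname{Stab}(q)}}{d!} \ts m_\lambda $, where $ m_\lambda $ denotes the monomial symmetric polynomial and $ \operatorname{Stab}(q) \subset S_d $ the stabilizer of $ q $; this coefficient is always strictly positive, so the image is nonzero, and two multi-indices give the same image up to a scalar precisely when they are permutations of one another, i.e., when $ \lambda(q) = \lambda(q') $.

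I would then invoke the classical fact that the monomial symmetric polynomials $ \{ m_\lambda \} $ over distinct partitions $ \lambda $ are linearly independent (they are supported on pairwise disjoint sets of monomials). It follows that the feature space dimension of $ k_s $ equals the number of partitions $ \lambda $ with at most $ d $ parts and $ 0 \le \abs{\lambda} \le p $. Grouping such partitions first by their size $ \abs{\lambda} = p_r \in \{0, \dots, p\} $ and then by their number of parts $ j \in \{0, \dots, d\} $---the number of partitions of $ p_r $ into exactly $ j $ parts being $ s_j(p_r) $ by definition---yields
\begin{equation*}
    n_{\phi_s} = \sum_{p_r=0}^{p} \sum_{j=0}^d s_j(p_r),
\end{equation*}
as claimed. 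Every partition with at most $ d $ parts has a unique number of parts $ j \le d $, so there is no double counting, and the $ j = 0 $ term contributes only for $ p_r = 0 $, accounting for the empty partition (the constant monomial).

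The main obstacle is essentially bookkeeping: verifying that ``at most $ d $ parts'' is exactly the constraint imposed by having $ d $ variables, and that $ \abs{\lambda} \le p $ is inherited from $ \abs{q} \le p $, together with the elementary linear-independence step. The one spot that deserves a line of care is the passage from $ \mathcal{S}(x^q) $ to $ m_\lambda $: one must confirm the stabilizer coefficient is never zero so that every monomial survives symmetrization---this is precisely what removes the $ \binom{d}{2} $ shift present in the antisymmetric count.
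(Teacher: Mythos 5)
Your proof is correct and follows essentially the same route as the paper's: the paper's (much terser) argument simply notes that $q$ and $\pi(q)$ yield the same feature function under $\mathcal{S}$, so only partitions of $0 \le \abs{q} \le p$ into at most $d$ parts need to be counted, which is exactly your count $\sum_{p_r=0}^{p}\sum_{j=0}^{d} s_j(p_r)$. Your additional steps---the positive stabilizer coefficient showing no monomial is annihilated, and the linear independence of the monomial symmetric polynomials $m_\lambda$---are correct justifications that the paper leaves implicit.
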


\begin{proof}
Let $ \pi $ be a permutation, then the multi-indices $ q $ and $ \pi(q) $ generate the same feature space function when we apply the symmetrization operator $ \mathcal{S} $ to the corresponding monomials $ x^q $ and $ x^{\pi(q)} $. We thus have to consider only partitions $ \mu $ of the integers $ 0 \le \abs{q} \le p $ since the ordering of the multi-indices does not matter.
\end{proof}

This case is similar to the antisymmetric case, with the difference that we require partitions of integers up to $ p $ instead of $ p-\binom{d}{2} $. Table~\ref{tab:symmetric polynomial kernel} lists the dimensions of the feature spaces spanned by the polynomial kernel $ k $ and its symmetric version $ k_s $ for different combinations of $ d $ and $ p $. Compared to the standard polynomial kernel, the number of features is significantly lower, but higher than the number of features generated by the antisymmetric polynomial kernel.

\begin{table}
\centering
\caption{Dimensions of the feature spaces spanned by the polynomial kernel $ k $ and its symmetric counterpart $ k_s $. Here, $ d $ is again the dimension of the state space and $ p $ the degree, cf.~Table~\ref{tab:antisymmetric polynomial kernel}. \\[-0.5ex]}
\label{tab:symmetric polynomial kernel}
\scalebox{0.9}{
\begin{tabular}{c|cc|cc|cc|cc|cc|cc|cc}
 \backslashbox{$d$}{$p$} & \multicolumn{2}{c|}{2} & \multicolumn{2}{c|}{3} & \multicolumn{2}{c|}{4} & \multicolumn{2}{c|}{5} & \multicolumn{2}{c|}{6} & \multicolumn{2}{c|}{7} & \multicolumn{2}{c}{8} \\ \hline
& $ n_\phi $ & $ n_{\phi_s} $ & $ n_\phi $ & $ n_{\phi_s} $ & $ n_\phi $ & $ n_{\phi_s} $ & $ n_\phi $ & $ n_{\phi_s} $ & $ n_\phi $ & $ n_{\phi_s} $ & $ n_\phi $ & $ n_{\phi_s} $ & $ n_\phi $ & $ n_{\phi_s} $ \\ \hline
2 & 6 & 4 & 10 & 6 & 15 & 9 & 21 & 12 & 28 & 16 & 36 & 20 & 45 & 25 \\
3 & 10 & 4 & 20 & 7 & 35 & 11 & 56 & 16 & 84 & 23 & 120 & 31 & 165 & 41 \\
4 & 15 & 4 & 35 & 7 & 70 & 12 & 126 & 18 & 210 & 27 & 330 & 38 & 495 & 53
\end{tabular}}
\end{table}

\subsection{Symmetric Gaussian kernels}
\label{ssec:Symmetric Gaussian kernels}

The symmetric kernel cannot be expressed as a Slater determinant anymore, but we can utilize a related concept. The \emph{permanent} of a matrix $ A \in \R^{d \times d} $ is defined by
\begin{equation*}
    \per(A) = \sum_{\pi \in S_d} \prod_{i=1}^d a_{\pi(i), i} =: \perb{5}
    \begin{matrix}
        a_{11} & \dots & a_{1d} \\
        \vdots & \ddots & \vdots \\
        a_{d1} & \dots & a_{dd}
    \end{matrix}\perb{5}.
\end{equation*}
While for $ d = 2 $ the permanent can be written as a determinant (by flipping the sign of $ a_{12} $ or $ a_{21} $), this is not possible anymore for $ d \ge 3 $ \cite{Szego13}. No polynomial-time algorithm for the computation of the permanent is known, but there are efficient approximation schemes for matrices with non-negative entries \cite{Kuck19}.

\begin{lemma}
Let $ k $ be the Gaussian kernel with bandwidth $ \sigma $, then
\begin{equation*}
    k_s(x, x^\prime) = \frac{1}{d!} \ts
    \perb{7}
    \begin{matrix}
        e^{-\frac{(x_1 - x^\prime_1)^2}{2 \ts \sigma^2}} & \dots & e^{-\frac{(x_1 - x^\prime_d)^2}{2 \ts \sigma^2}} \\
        \vdots & \ddots & \vdots \\
        e^{-\frac{(x_d - x^\prime_1)^2}{2 \ts \sigma^2}} & \dots & e^{-\frac{(x_d - x^\prime_d)^2}{2 \ts \sigma^2}} \\
    \end{matrix}\perb{7}.
\end{equation*}
\end{lemma}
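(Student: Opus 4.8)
The plan is to mirror the proof of Lemma~\ref{lem:Gaussian Slater kernel}, replacing the Leibniz formula for the determinant with the analogous expansion of the permanent. The key observation is that the permanent is exactly what one obtains from the Slater-determinant argument after dropping all the signs, which matches the definition of $k_s$ via Lemma~\ref{lem:permutation invariant symmetric kernel}.

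First I would write down the permanent of the matrix $A$ with entries $a_{ij} = e^{-(x_i - x^\prime_j)^2/(2\sigma^2)}$ using its definition, $\per(A) = \sum_{\pi \in S_d} \prod_{i=1}^d a_{\pi(i), i}$. Then I would simplify the product inside the sum: $\prod_{i=1}^d e^{-(x_{\pi(i)} - x^\prime_i)^2/(2\sigma^2)} = e^{-\sum_{i=1}^d (x_{\pi(i)} - x^\prime_i)^2/(2\sigma^2)} = e^{-\norm{\pi(x) - x^\prime}^2/(2\sigma^2)} = k(\pi(x), x^\prime)$, exactly as in the antisymmetric case but without the sign factor. This yields $\per(A) = \sum_{\pi \in S_d} k(\pi(x), x^\prime)$.

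Finally I would invoke Lemma~\ref{lem:permutation invariant symmetric kernel}, which states $k_s(x, x^\prime) = \frac{1}{d!}\sum_{\pi \in S_d} k(\pi(x), x^\prime)$ for permutation-invariant kernels, and note that the Gaussian kernel is permutation-invariant. Dividing the permanent expression by $d!$ gives the claimed identity. The proof is essentially a one-line chain of equalities once the permanent expansion is written out, so there is no real obstacle; the only thing to be slightly careful about is confirming that the Gaussian kernel is permutation-invariant (already noted in the text, since $\norm{\cdot}$ is permutation-invariant) so that Lemma~\ref{lem:permutation invariant symmetric kernel} applies.

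\begin{proof}
Expanding the permanent via its definition gives
\begin{align*}
    \perb{7}
    \begin{matrix}
        e^{-\frac{(x_1 - x^\prime_1)^2}{2 \ts \sigma^2}} & \dots & e^{-\frac{(x_1 - x^\prime_d)^2}{2 \ts \sigma^2}} \\
        \vdots & \ddots & \vdots \\
        e^{-\frac{(x_d - x^\prime_1)^2}{2 \ts \sigma^2}} & \dots & e^{-\frac{(x_d - x^\prime_d)^2}{2 \ts \sigma^2}}
    \end{matrix}\perb{7}
        &= \sum_{\pi \in S_d} \prod_{i=1}^d e^{-\frac{(x_{\pi(i)} - x^\prime_i)^2}{2 \ts \sigma^2}} \\
        &= \sum_{\pi \in S_d} e^{-\frac{\sum_{i=1}^d (x_{\pi(i)} - x^\prime_i)^2}{2 \ts \sigma^2}} \\
        &= \sum_{\pi \in S_d} e^{-\frac{\norm{\pi(x) - x^\prime}^2}{2 \ts \sigma^2}} \\
        &= \sum_{\pi \in S_d} k(\pi(x), x^\prime).
\end{align*}
Since the Gaussian kernel is permutation-invariant, Lemma~\ref{lem:permutation invariant symmetric kernel} yields $ k_s(x, x^\prime) = \frac{1}{d!} \sum_{\pi \in S_d} k(\pi(x), x^\prime) $, and dividing the above identity by $ d! $ gives the claim.
\end{proof}
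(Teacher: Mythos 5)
Your proof is correct and follows essentially the same route as the paper: expand the permanent by its definition, collapse the product of exponentials into $k(\pi(x), x^\prime)$, and apply Lemma~\ref{lem:permutation invariant symmetric kernel} to the permutation-invariant Gaussian kernel. Nothing to add.
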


\begin{proof}
The proof is analogous to the one for Lemma~\ref{lem:Gaussian Slater kernel}. Using the definition of the permanent, we obtain
\begin{align*}
    \perb{7}
    \begin{matrix}
        e^{-\frac{(x_1 - x^\prime_1)^2}{2 \ts \sigma^2}} & \dots & e^{-\frac{(x_1 - x^\prime_d)^2}{2 \ts \sigma^2}} \\
        \vdots & \ddots & \vdots \\
        e^{-\frac{(x_d - x^\prime_1)^2}{2 \ts \sigma^2}} & \dots & e^{-\frac{(x_d - x^\prime_d)^2}{2 \ts \sigma^2}} \\
    \end{matrix}
    \perb{7}
        &= \sum_{\pi \in S_d} \prod_{i=1}^d e^{-\frac{(x_{\pi(i)} - x^\prime_i)^2}{2 \ts \sigma^2}} \\
        &= \sum_{\pi \in S_d} e^{-\frac{\sum_{i=1}^d (x_{\pi(i)} - x^\prime_i)^2}{2 \ts \sigma^2}} \\
        &= \sum_{\pi \in S_d} e^{-\frac{\norm{\pi(x) - x^\prime}^2}{2 \ts \sigma^2}} \\
        &= \sum_{\pi \in S_d} k(\pi(x), x^\prime).
\end{align*}
The result then follows from Lemma~\ref{lem:permutation invariant symmetric kernel}.
\end{proof}

\begin{example} \label{ex:graphs}
Assume we have a set of undirected graphs that we would like to classify or categorize. The results should not depend on the vertex labels and thus be identical for isomorphic graphs. Let $ A, A^\prime \in \R^{d \times d} $ be the adjacency matrices of the graphs $ G $ and $ G^\prime $, respectively. We define a Gaussian kernel for graphs by
\begin{equation*}
    k(G, G^\prime) = \exp\left(-\frac{\norm{A-A^\prime}_F^2}{2 \ts \sigma^2}\right),
\end{equation*}
where $ \norm{\,\cdot\,}_F $ denotes the Frobenius norm, and make it symmetric as described above. The only difference here is that we have to define $ \pi(A) = \big(\ts a_{\pi(i),\pi(j)} \ts\big)_{i, j=1}^d $ to permute rows and columns simultaneously. The kernel function $k_s (G, G^\prime)$ can then be expressed in terms of so-called hyperpermanents. We have
\begin{equation*}
    k_s (G, G^\prime) = \sum_{\pi \in S_d} \exp\left(-\frac{\norm{\pi(A)-A^\prime}_F^2}{2 \ts \sigma^2}\right) = \sum_{\pi \in S_d} \prod_{i=1}^d \prod_{j=1}^d \exp\left(-\frac{\big(a_{\pi(i),\pi(j)}- a^\prime_{i,j}\big)^2}{2 \ts \sigma^2}\right).
\end{equation*}
The derivation of a formula for the Laplace expansion of hyperpermanents can be found in Appendix~\ref{app:Laplace}. For the considered example, we set $ \sigma = 1 $ and randomly generate a set of $ 100 $ undirected connected graphs of size $ d = 5 $. We then apply kernel PCA, see \cite{Schoelkopf98}, using the symmetric kernel $ k_s $. Sorting the graphs according to the first principal component, we obtain the ordering shown in Figure~\ref{fig:Graph_kPCA} (only a subset of the graphs is displayed). Isomorphic graphs are grouped into the same category. \exampleSymbol

\begin{figure}
    \centering
    \includegraphics[width=\textwidth]{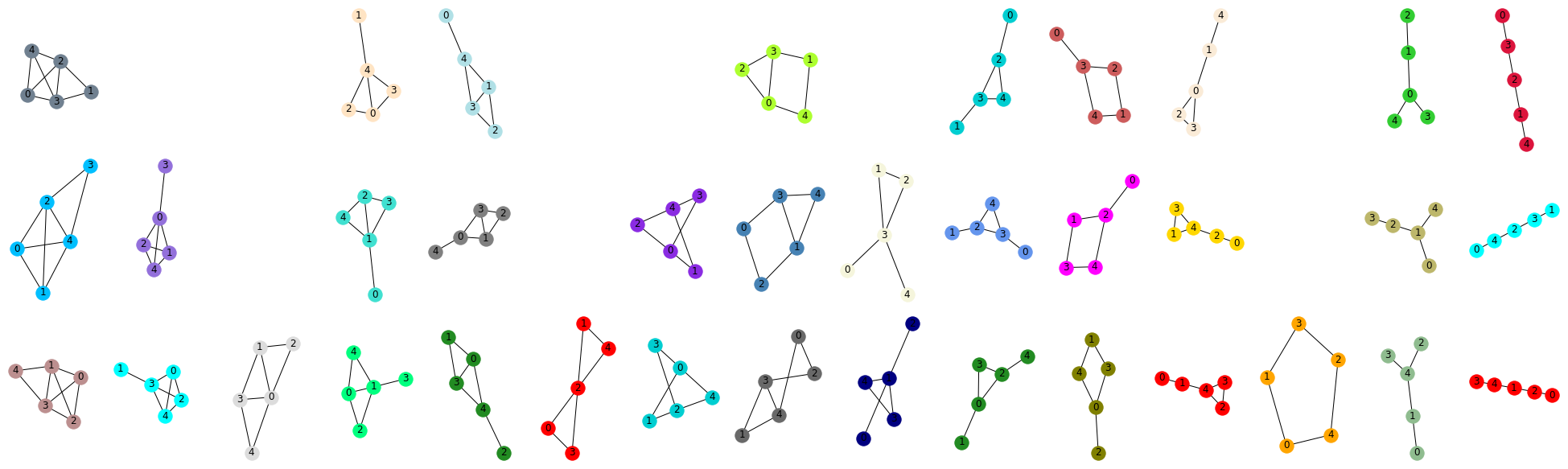}
    \caption{Application of kernel PCA to a set of undirected graphs. The $x$-direction corresponds to the first principal component. The results show that isomorphic graphs are assigned the same value.}
    \label{fig:Graph_kPCA}
\end{figure}

\end{example}

Other learning algorithms such as kernel $k$-means, kernel ridge regression, or support vector machines can be used in the same way, enabling us to cluster, make predictions for, or classify data where the order of elements is irrelevant.

\subsection{Product or quotient representations of symmetric kernels}

The aim now is to express a symmetric kernel not as a Slater permanent but as a product or quotient of antisymmetric functions. As shown in Section~\ref{sec:Antisymmetric kernels}, an antisymmetric kernel is zero for all $ x $ for which a (non-trivial) permutation $ \pi $ exists such that $ \pi(x) = x $. Therefore, products of antisymmetric kernels are zero for such $ x $ as well, see also Figure~\ref{fig:ProductQuotientKernels}. We thus mainly restrict ourselves to quotients. Let $ k_a^{(1)} $ and $ k_a^{(2)} $ be two permutation-invariant antisymmetric kernels and $ k_a^{(2)}(x, x^\prime) \ne 0 $. We define
\begin{equation*}
    k_s(x, x^\prime) = \frac{k_a^{(1)}(x, x^\prime)}{k_a^{(2)}(x, x^\prime)}.
\end{equation*}
Then
\begin{equation*}
    k_s(\pi(x), x^\prime)
        = \frac{k_a^{(1)}(\pi(x), x^\prime)}{k_a^{(2)}(\pi(x), x^\prime)}
        = \frac{\sgn(\pi) \ts k_a^{(1)}(x, x^\prime)}{\sgn(\pi) \ts k_a^{(2)}(x, x^\prime)}
        = k_s(x, x^\prime).
\end{equation*}

\begin{remark}
If the numerator and denominator can be written as determinants, i.e., $ k_a^{(1)}(x, x^\prime) = \det(K_1) $ and $ k_a^{(2)}(x, x^\prime) = \det(K_2) $, we obtain
\begin{equation*}
    k_s(x, x^\prime) = \frac{\det(K_1)}{\det(K_2)} = \det(K_1) \ts \det(K_2^{-1}) = \det(K_1 \ts K_2^{-1}).
\end{equation*}
\end{remark}

\begin{example}\label{ex: quotient Gauss kernel}
Suppose $ d = 2 $. Let $ k^{(1)} $ and $ k^{(2)} $ be two Gaussian kernels with bandwidths $ \sigma_1 $ and $ \sigma_2 $, respectively, where $ \sigma_1 < \sigma_2 $. If the bandwidths are sufficiently small, either $ k^{(1)}(x, x^\prime) $ and $ k^{(2)}(x, x^\prime) $ or $ k^{(1)}(\pi(x), x^\prime) $ and $ k^{(2)}(\pi(x), x^\prime) $ will be close to zero (unless $ \pi(x) = x $), where $ \pi = (1, 2) $ in cycle notation. Assume w.l.o.g.\ the latter holds, then
\begin{equation*}
    k_s(x, x^\prime)
        = \frac{k_a^{(1)}(x, x^\prime)}{k_a^{(2)}(x, x^\prime)}
        = \frac{k^{(1)}(x, x^\prime) - k^{(1)}(\pi(x), x^\prime)}{k^{(2)}(x, x^\prime) - k^{(2)}(\pi(x), x^\prime)}
        \approx \frac{k^{(1)}(x, x^\prime)}{k^{(2)}(x, x^\prime)},
\end{equation*}
which is a Gaussian with bandwidth $ \sigma $ satisfying $ \frac{1}{\sigma^2} = \frac{1}{\sigma_1^2} - \frac{1}{\sigma_2^2} $. This is illustrated in Figure~\ref{fig:ProductQuotientKernels}. Furthermore, the limit of $k_s(x,x^\prime)$ as $x_2 \rightarrow x_1$ exists and is given by
\begin{equation*}
    k_s(x, x^\prime) = \frac{\sigma_2^2}{\sigma_2^1} \frac{k^{(1)}(x, x^\prime)}{k^{(2)}(x, x^\prime)},
\end{equation*}
with $ x = [x_1, x_1]^\top $, see Appendix~\ref{app: quotient kernels}. \exampleSymbol

\begin{figure}
    \centering
    \subfiguretitle{(a) \hspace*{4.1cm} (b) \hspace*{4.1cm} (c)}
    \includegraphics[width=0.9\textwidth]{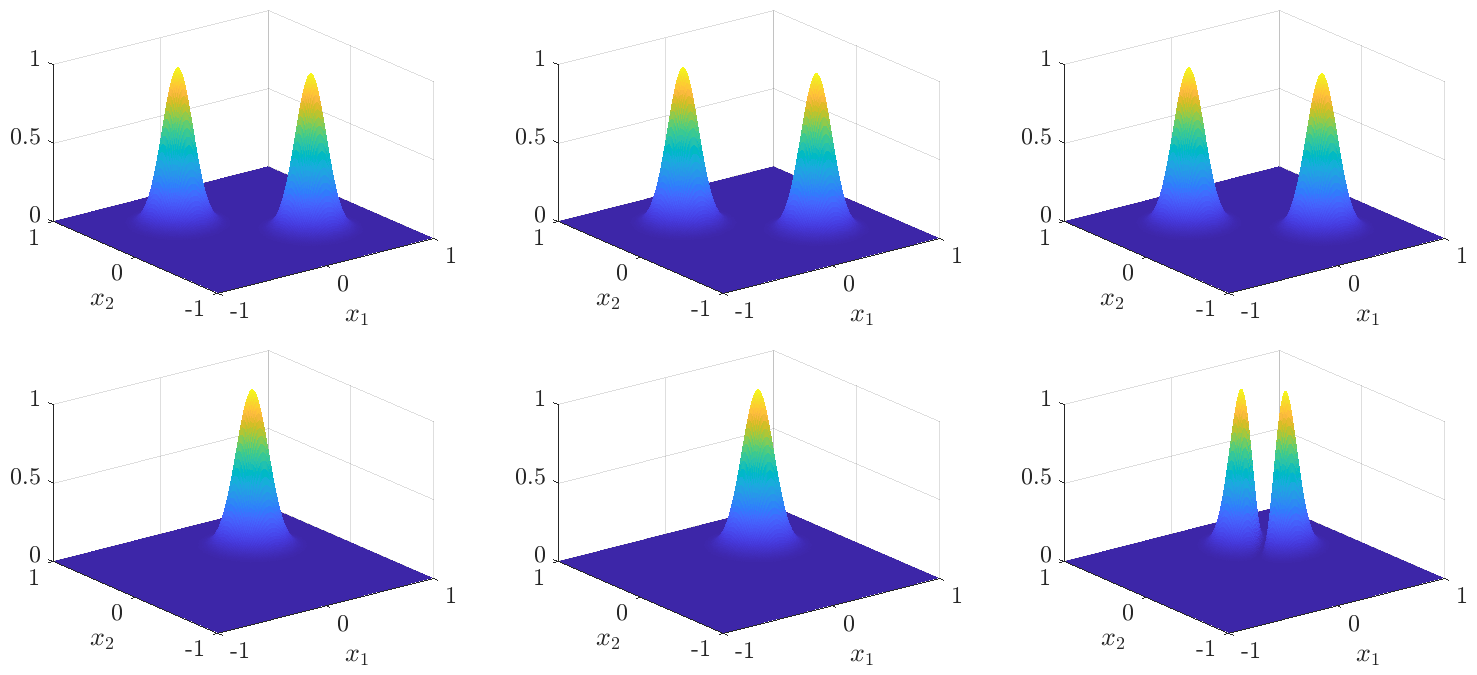}
    \caption{(a)~Symmetric Gaussian kernel. (b)~Quotient of antisymmetric Gaussian kernels. (c)~Product of antisymmetric Gaussian kernels. The bandwidths of the antisymmetric kernels were chosen in such a way that the resulting functions approximate the symmetric Gaussian kernel. In the top row $ x^\prime = [0.4, -0.3]^\top $ and in the bottom row $ x^\prime = [0.4, 0.35]^\top $. Product kernels cannot approximate the symmetric Gaussian kernel if $  x^\prime $ is close to the separating boundary given by $ x_1 = x_2 $.}
    \label{fig:ProductQuotientKernels}
\end{figure}

\end{example}

The symmetric Gaussian kernel can be approximated by a quotient of antisymmetric Gaussian kernels, which can be evaluated in $ \mathcal{O}(d^3) $, thus avoiding the non-polynomial complexity of the permanent. The question whether such kernels are universal is beyond the scope of this work.

\section{Applications}
\label{sec:Applications}

In addition to the guiding examples presented above, we will illustrate the efficacy of the derived kernels with the aid of quantum physics and chemistry problems.

\subsection{Particles in a one-dimensional box}

Let us first consider a simple one-dimensional two-particle system. We define a potential $ V $ by
\begin{equation*}
    V(x) =
    \begin{cases}
        0, & 0 \le x \le L, \\
        \infty, & \text{otherwise}.
    \end{cases}
\end{equation*}
Furthermore, we assume that the two particles do not interact and obtain the Schrödinger equation
\begin{equation} \label{eq:two electrons in a box}
    -\frac{\hbar^2}{2 \ts \mathbf{m}} \Delta \psi(x_1, x_2) = E \ts \psi(x_1, x_2)
\end{equation}
for $ 0 \le x_1, x_2 \le L $. By separating the two variables, we obtain the classical particle in a box problem, with eigenvalues $ E_\ell = \frac{\hbar^2 \pi^2 \ell^2}{2 \mathbf{m} L^2} $ and eigenfunctions $ \psi_\ell(x) = \sqrt{\frac{2}{L}} \sin\left( \frac{\pi \ell \ts x}{L} \right) $, for $ \ell = 1, 2, 3, \dots $, see, for instance, \cite{Hall13}. For the two-particle system, the eigenvalues are hence of the form
\begin{equation*}
    E_{\ell_1, \ell_2} = E_{\ell_1} + E_{\ell_2} = \frac{\hbar^2 \pi^2 (\ell_1^2 + \ell_2^2)}{2 \mathbf{m} L^2}
\end{equation*}
and the eigenfunctions are
\begin{equation*}
    \psi_{\ell_1, \ell_2}(x_1, x_2) = \psi_{\ell_1}(x_1) \ts \psi_{\ell_2}(x_2) = \frac{2}{L} \sin\left( \frac{\pi \ell_1 \ts x_1}{L} \right) \sin\left( \frac{\pi \ell_2 \ts x_2}{L} \right).
\end{equation*}
However, since the two particles are physically indistinguishable, the wave functions must satisfy $ \abs{\psi_{\ell_1, \ell_2}(x_1, x_2)}^2 = \abs{\psi_{\ell_1, \ell_2}(x_2, x_1)}^2 $, which implies that the functions are either symmetric (if the particles are bosons) or antisymmetric (if the particles are fermions). Let us assume that the two particles are electrons, i.e., fermions. We thus want to compute antisymmetric solutions of the time-independent Schrödinger equation by applying the approach introduced in Section~\ref{ssec:Kernel-based solution of the Schroedinger equation}, see also \cite{KNH20}. In the same way, we could assume that the particles are bosons and compute symmetric solutions by replacing the antisymmetric kernel by a symmetric kernel.

We define $ \hbar = 1 $, $ \mathbf{m} = 1 $,  and $ L = \boldsymbol{\pi} $, choose the antisymmetric Gaussian kernel with bandwidth $ \sigma = 0.1 $, and generate $ m = 900 $ uniformly sampled points in $ [0, L] \times [0, L] $. Additionally, to ensure that the eigenfunctions are zero outside the box, we place $ 124 $ equidistantly distributed test points on the boundary and enforce $ \psi_{\ell_1, \ell_2}(x_1, x_2) = 0 $ for these boundary points. We thus have to solve a constrained eigenvalue problem and use the algorithm described in \cite{Golub73}. The first three eigenfunctions $ \psi_{1, 2} $, $ \psi_{1, 3} $, and $ \psi_{2, 3} $ are shown in Figure~\ref{fig:TwoElectrons} and good approximations of the analytically computed eigenfunctions. The probability that the two electrons are in the same location is always zero. Furthermore, the results show that by increasing the number of data points we obtain more accurate and less noisy estimates of the true eigenvalues.

\begin{remark}
We would like to point out that
\begin{itemize}[leftmargin=1em, itemsep=-0.5ex, topsep=0.5ex]
\item this example is just meant as an illustration of the concepts and not as a realistic physical model;
\item eigenfunctions with $ \ell_1 = \ell_2 $ are symmetric and eliminated by the antisymmetrization operation;
\item approximations of the eigenfunctions can be obtained using far fewer points ($ m < 50 $), but the eigenvalues will be considerably overestimated (kernels tailored to quantum mechanics applications might lead to better approximations);
\item the antisymmetry assumption is encoded only in the kernel, not in the Schrödinger equation itself.
\end{itemize}
\end{remark}

\begin{figure}
    \centering
    \begin{minipage}{0.22\textwidth}
        \centering
        \subfiguretitle{~~(a)}
        \includegraphics[width=\textwidth]{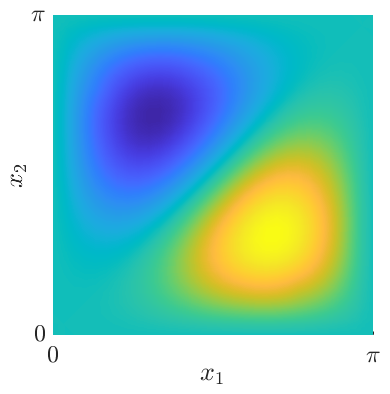}
    \end{minipage}
    \begin{minipage}{0.22\textwidth}
        \centering
        \subfiguretitle{~~(b)}
        \includegraphics[width=\textwidth]{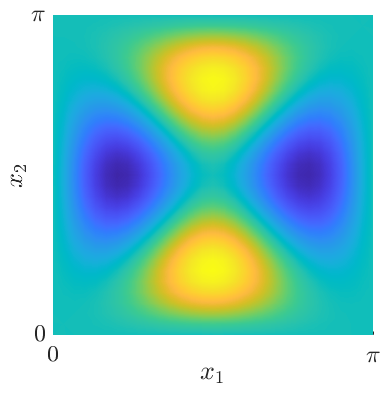}
    \end{minipage}
    \begin{minipage}{0.22\textwidth}
        \centering
        \subfiguretitle{~~(c)}
        \includegraphics[width=\textwidth]{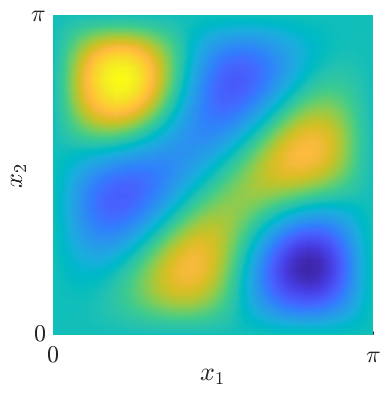}
    \end{minipage}
    \begin{minipage}{0.29\textwidth}
        \centering
        \subfiguretitle{~~(d)}
        \includegraphics[width=\textwidth]{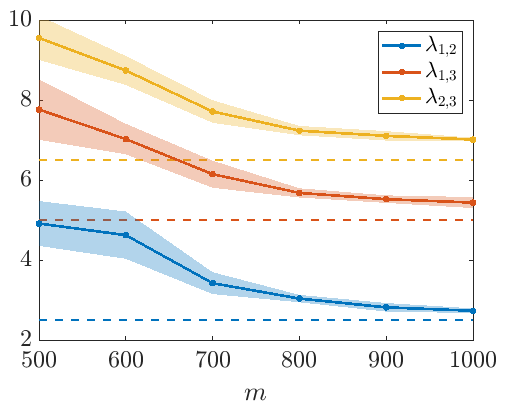}
    \end{minipage}
    \caption{Numerically computed antisymmetric eigenfunctions (a) $ \psi_{1, 2} $, (b) $ \psi_{1, 3} $, and (c)~$ \psi_{2, 3} $ with corresponding eigenvalues $ \lambda_{1, 2} \approx 2.75 $, $ \lambda_{1, 3} \approx 5.40 $, and $ \lambda_{2, 3} \approx 7.07 $ for $ m = 900 $. The eigenvalues are slightly larger than the analytically computed values $ \lambda_{1, 2} = 2.5 $, $ \lambda_{1, 3} = 5 $, and $ \lambda_{2, 3} = 6.5 $. (d)~Eigenvalues as a function of the number of data points. The solid lines represent the numerically computed eigenvalues, the shaded areas the standard deviation, and the dashed lines the analytically computed eigenvalues.}
    \label{fig:TwoElectrons}
\end{figure}

This can be easily extended to the multi-particle case. We now add electron--electron interaction terms resulting in the Hamiltonian
\begin{equation*}
    \mathcal{H} = -\frac{\hbar^2}{2 \ts \mathbf{m}} \Delta + \sum_{i \ne j} \frac{1}{\abs{x_i - x_j}}.
\end{equation*}
For $ d = 3 $, we randomly generate 3000 interior points and 600 boundary points to enforce Dirichlet boundary conditions. We choose a Gaussian kernel with bandwidth $ \sigma = 0.1 $, assemble the Gram matrices, and again solve the resulting constrained eigenvalue problem. The results are shown in Figure~\ref{fig:ThreeElectrons}. For the sake of comparison, we also plot the corresponding eigenfunctions of the Schrödinger equation without the electron--electron interaction. It can be seen that the eigenfunctions for the separable case are similar to the eigenfunctions where the interaction terms are included. For this particular system, the interaction terms do not seem to have a drastic effect on the system's low-lying energy states. In general, however, their effect on the electronic wavefunction can be significant. We also remark that energies and wavefunctions of the interacting system could in principle be approximated by perturbation techniques. However, due to the degeneracy of the antisymmetric states, such a perturbation analysis seems beyond the scope of this work.

\begin{figure}
    \centering
    \subfiguretitle{(a) \hspace*{4.1cm} (b) \hspace*{4.1cm} (c)}
    \includegraphics[width=0.9\textwidth]{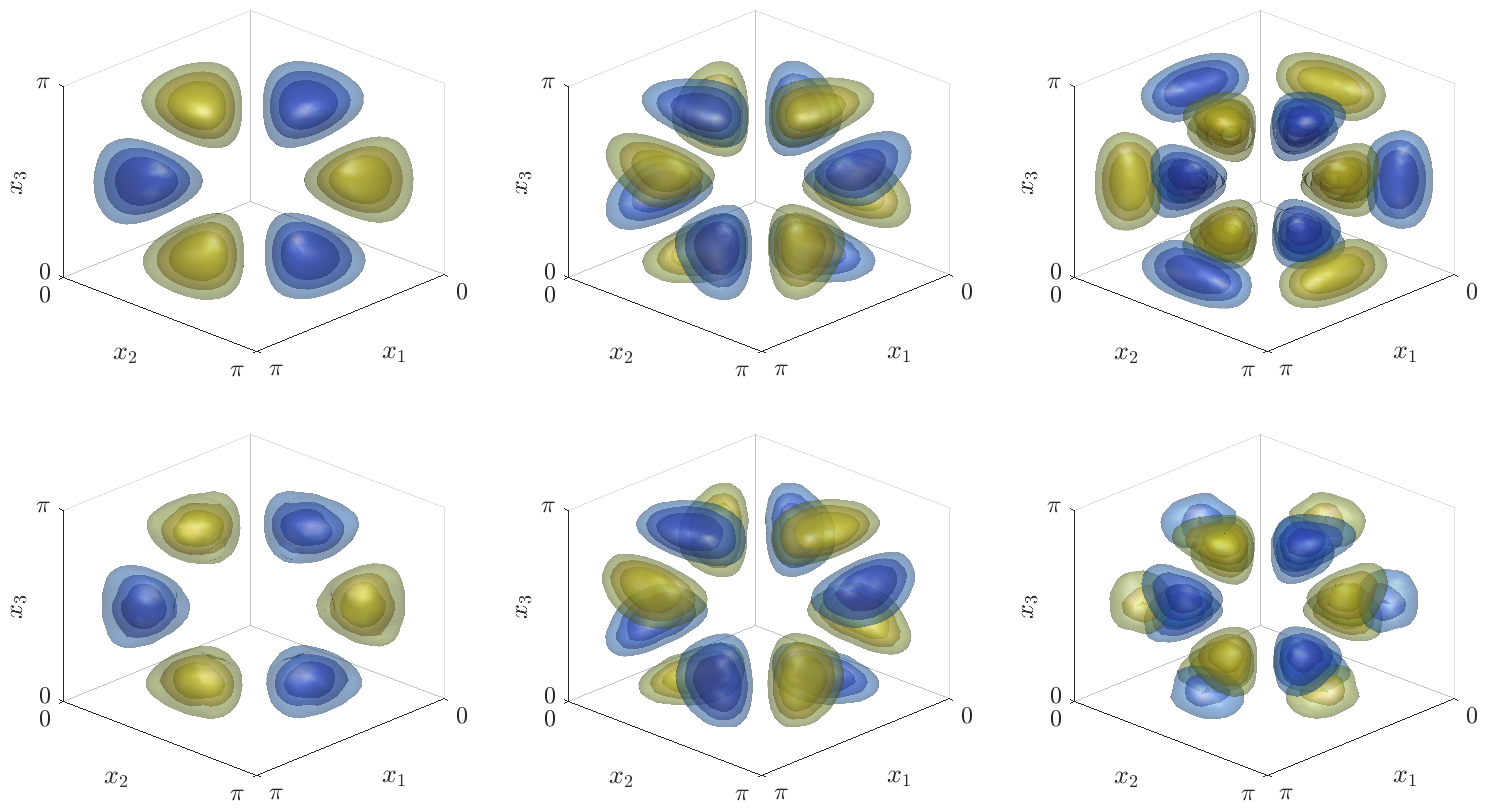}
    \caption{Antisymmetric eigenfunctions (a) $ \psi_{1, 2, 3} $, (b) $ \psi_{1, 2, 4} $, and (c) $ \psi_{1, 3, 4} $. The top row shows the analytically computed eigenfunctions omitting electron--electron interaction, the bottom row the numerically computed eigenfunctions including repulsive forces.}
    \label{fig:ThreeElectrons}
\end{figure}

\subsection{Acyclic molecules}
\label{sec:acyclic}

As a second example, we consider a data set of acyclic molecules~\cite{GAUZERE2012}. The aim is to determine the boiling points of these molecules containing the elements C, H, O, and S. The data set\footnote{The data set can be found at \url{https://brunl01.users.greyc.fr/CHEMISTRY/}.} consists of $183$ graphs $G=(V,E)$ representing the molecular structures and the corresponding boiling points in degrees Celsius, see Figure~\ref{fig:acyclic} for a few examples of molecules included in the data set. The number of vertices $\abs{V}$ varies between $3$ and $11$, where the hydrogen atoms of the molecules are neglected. Thus, in order to compare the graphs of different sizes, we expand all adjacency matrices to $\R^{d \times d}$ with $d=11$ by appending rows and columns of zeros, representing artificial isolated nodes.

We define a symmetrized Laplacian kernel on graphs, cf.~Example~\ref{ex:graphs}. Given the adjacency matrices $A, A^\prime \in \R^{d \times d}$ of the graphs $G=(V,E)$ and $ G^\prime=(V^\prime, E^\prime)$ as well as the kernel parameter $\sigma>0$, we define the tensor $T \in \R^{d \times d \times d \times d}$ by
\begin{equation*}
  t_{i,j,k,l} = \exp \left( - \frac{\abs{ a_{i,j} - a^\prime_{k,l} }}{\sigma}\right)
\end{equation*}
for $i \neq j$ and $k \neq l$ and
\begin{equation*}
  t_{i,i,k,k} = \begin{cases} 1, & \text{if $V_i$ and $V^\prime_k$ represent the same atom,} \\ \exp \left( - 1/\sigma \right), & \text{otherwise.} \end{cases}
\end{equation*}
The latter definition ensures that we avoid unwanted effects of any ordinal labeling of the nodes.
Using the hyperpermanent of $T$, the kernel evaluation $k_s(G,G^\prime)$ can be written as
\begin{equation}\label{eq:graph kernel by hyperpermanent}
  k_s(G,G^\prime) = \hper(T) = \sum_{\pi \in S_d} \prod_{i=1}^d \prod_{j=1}^d t_{i, j, \pi(i), \pi(j)}.
\end{equation}
Note that we do not consider entries $t_{i,j,k,l}$ with either $i=j, k \neq l$ or $i \neq j, k=l$ since $i=j \Leftrightarrow \pi(i)=\pi(j)$. We refer to Appendix~\ref{app:hyperpermanents} for different methods and simplifications for computing the hyperpermanent of~$T$.

\begin{figure}
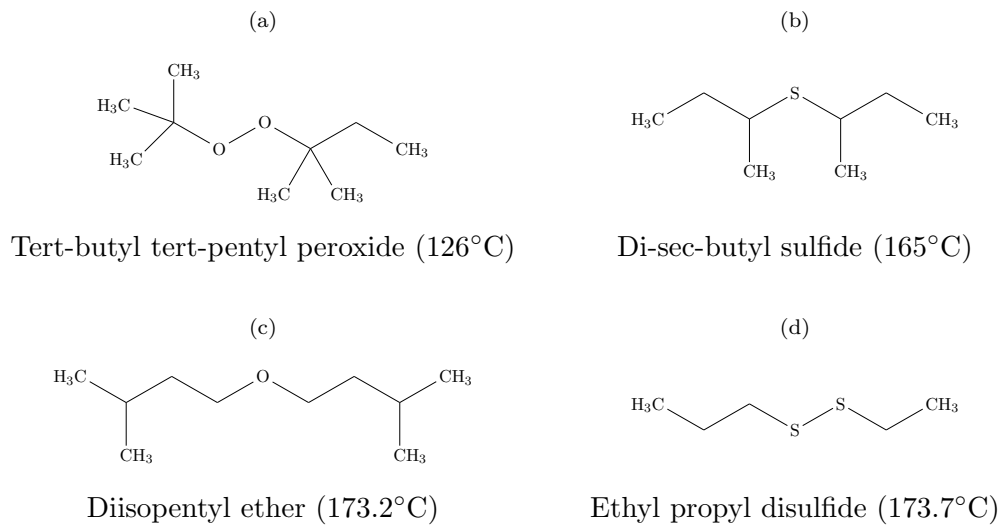

    \centering
    \begin{minipage}{0.5\textwidth}
        \centering
        \scriptsize{(a)}
    \end{minipage}
    \begin{minipage}{0.4\textwidth}
        \centering
        \scriptsize{(b)}
    \end{minipage} \\[1em]
    \begin{minipage}{0.5\textwidth}
        \centering
        \scalebox{0.6}{\chemfig{H_3C-[:45](-[:160]H_3C)(-[:90]CH_3)-[:-30]O-[:30]O-[:-30](-[:-60]CH_3)(-[:-120]H_3C)-[:30]-[:-30]CH_3}}
    \end{minipage}
    \begin{minipage}{0.4\textwidth}
        \centering
        \scalebox{0.6}{\chemfig{H_3C-[:30]-[:-30](-[:-90]CH_3)-[:30]S-[:-30](-[:-90]CH_3)-[:30]-[:-30]CH_3}}
    \end{minipage} \\[1em]
    \begin{minipage}{0.5\textwidth}
        \centering
        Tert-butyl tert-pentyl peroxide (126$^\circ$C) 
    \end{minipage}
    \begin{minipage}{0.4\textwidth}
        \centering
        Di-sec-butyl sulfide (165$^\circ$C) 
    \end{minipage} \\[1.5em]
    \begin{minipage}{0.5\textwidth}
        \centering
        \scriptsize{(c)}
    \end{minipage}
    \begin{minipage}{0.4\textwidth}
        \centering
        \scriptsize{(d)}
    \end{minipage} \\[1em]
    \begin{minipage}{0.5\textwidth}
        \centering
        \scalebox{0.6}{\chemfig{H_3C-[:-30](-[:-90]CH_3)-[:30]-[:-30]-[:30]O-[:-30]-[:30]-[:-30](-[:-90]CH_3)-[:30]CH_3}}
    \end{minipage}
    \begin{minipage}{0.4\textwidth}
        \centering
        \scalebox{0.6}{\chemfig{H_3C-[:-30]-[:30]-[:-30]S-[:30]S-[:-30]-[:30]CH_3}}
    \end{minipage} \\[1em]
    \begin{minipage}{0.5\textwidth}
        \centering
        Diisopentyl ether (173.2$^\circ$C) 
    \end{minipage}
    \begin{minipage}{0.4\textwidth}
        \centering
        Ethyl propyl disulfide (173.7$^\circ$C) 
    \end{minipage}
    \caption{Skeletal formulas of a selection of samples taken from the data set. The set contains oxygen and sulfur compounds of different complexities. The associated boiling points are between $-23.7^\circ$C and $250^\circ$C.}
    \label{fig:acyclic}
\end{figure}

For kernel-based (ridge) regression (see, e.g., \cite{MURPHY2012}), we extract $165$ adjacency matrices ($\approx 90\%$) and their corresponding boiling point temperatures from the data set as training samples, the other data pairs constitute the test set. That is, for any $G$ in the test set, the regression function is given by $f(G) = \Theta^\top K_{\text{train},G}$, where the vector $K_{\text{train}, G} \in \R^{165}$ is the Gram matrix (or \emph{kernel matrix}) corresponding to the training samples (rows) and the test sample $G$ (column). The vector $\Theta \in \R^{165}$ is the solution of $K_{\text{train},\text{train}} \Theta = b$ with $b$ being the vector of boiling points of the molecules in the training set. We then compute the average error as well the root-mean-square error in the boiling points of the test set in order to evaluate the generalizabilty of the learned regression function. We repeat each experiment $10000$ times with randomly chosen training and test sets, the results for different kernel parameters $\sigma$ are shown in Figure~\ref{fig:acyclic_results}~(a).

\begin{figure}
    \centering
    \begin{minipage}{0.45\textwidth}
        \centering
        \subfiguretitle{(a)}
        \includegraphics[height=5cm]{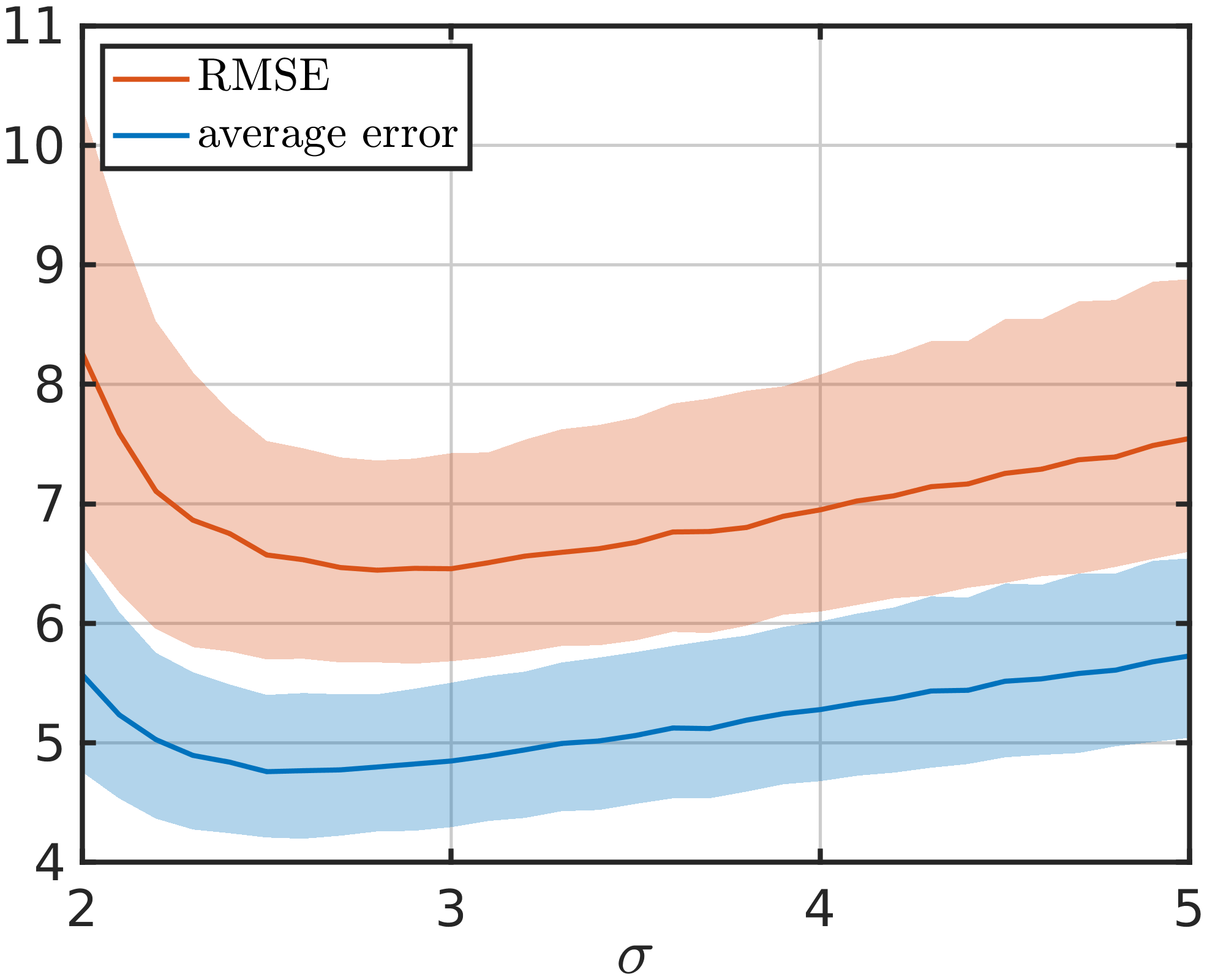}
    \end{minipage}
    \begin{minipage}{0.45\textwidth}
        \centering
        \subfiguretitle{(b)}
        \includegraphics[height=5cm]{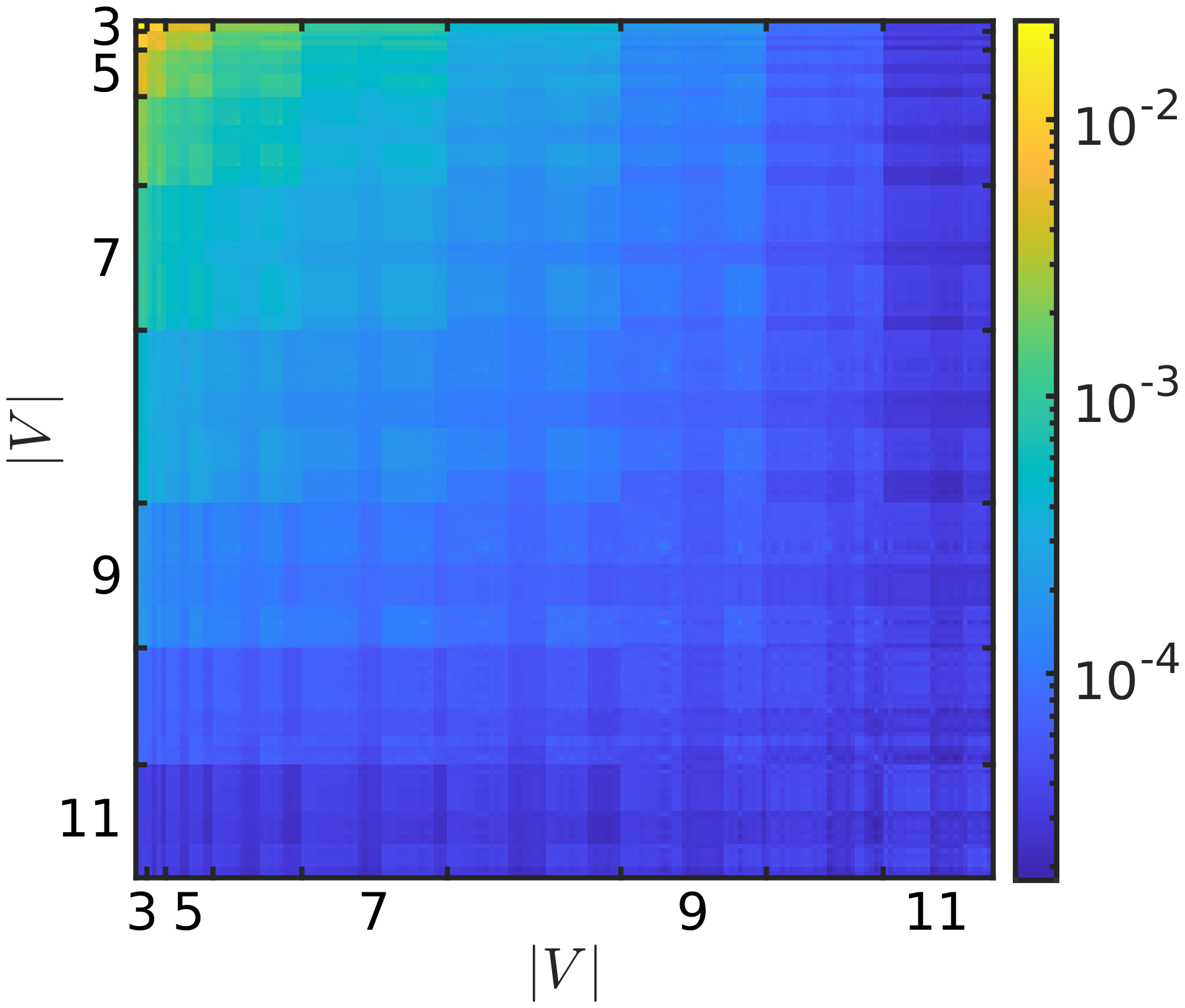}
    \end{minipage}
    \caption{(a) Average errors and root-mean-square errors for the test set for different values of $\sigma$. The solid lines depict the median and the semi-transparent areas comprise the 30th to the 70th percentile of the respective errors. (b) Entries of the Gram matrix corresponding to the whole data set for $\sigma=3$ with a logarithmically scaled color map. The samples are sorted by number of atoms, type of compound (oxygen/sulfur), and number of contained heteroatoms.}
    \label{fig:acyclic_results}
\end{figure}

The best results in terms of the average and the root-mean-square error are obtained for kernel parameters $\sigma$ between $2.5$ and $2.8$, see Table~\ref{tab:acyclic} for details. According to the data set's website, the best results so far for the boiling point prediction on $90\%$ of the set as training data and $10\%$ as test data are achieved by applying so-called \emph{treelet kernels} which exploit all possible graph/tree patterns up to a given size \cite{GAUZERE2012}. In this case, the average error is listed as $4.87$ and the root-mean-square error as $6.75$. Both values are comparable with our results.

\begin{table}
\centering
\caption{Mean and median of the average error and root-mean-square error over all repetitions for values of $\sigma$ between $2.5$ and $2.8$. }
\label{tab:acyclic}
\begin{tabular}{c|cc|cc}
 \multirow{2}{*}{$\sigma$} & \multicolumn{2}{c|}{average error} & \multicolumn{2}{c}{root-mean-square error} \\\cline{2-5}
  & ~~~~mean~~~~ & ~~~~median~~~~ & ~~~~mean~~~~ & ~~~~median~~~~ \\\hline
  $0.25$ & $4.90$ & $4.76$ & $6.85$ & $6.57$ \\
  $0.26$ & $4.91$ & $4.77$ & $6.87$ & $6.53$ \\
  $0.27$ & $4.92$ & $4.77$ & $6.88$ & $6.47$ \\
  $0.28$ & $4.94$ & $4.80$ & $6.91$ & $6.44$ \\
\end{tabular}
\end{table}

As shown in Figure~\ref{fig:acyclic_results}~(b), the entries of the Gram matrix tend to decrease for larger molecules. This effect can be explained by the expansion of the adjacency matrices and similarities of the molecules with small numbers of atoms. For instance, the first two compounds in the (ordered) data set are dimethyl ether (C$_2$H$_6$O) and dimethyl sulfide (C$_2$H$_6$S). Due to the expansion from $\abs{V}=3$ to $\abs{V}=11$, the majority of the permutations in~\eqref{eq:graph kernel by hyperpermanent} do not affect the adjacency matrix of $G^\prime$, cf.~Appendix~\ref{app:isolated}. The block structure of the matrix arises from the ordering of the data set, i.e., each group of molecules with the same number of atoms is divided into subgroups of compounds containing one oxygen atom, two oxygen atoms, one sulfur atom, and two sulfur atoms.

\section{Conclusion}
\label{sec:Conclusion}

We derived symmetric and antisymmetric kernels that can be used in kernel-based learning algorithms such as kernel PCA, kernel CCA, or support vector machines, but also to approximate symmetric or antisymmetric eigenfunctions of transfer operators or differential operators (e.g., the Koopman generator or Schrödinger operator). Potential applications range from point cloud analysis and graph classification to quantum physics and chemistry. Furthermore, we analyzed the induced reproducing kernel Hilbert spaces and resulting feature space dimensions. The effectiveness of the proposed kernels was demonstrated using guiding examples and simple benchmark problems.

The next step is now to apply kernel-based methods to more complex quantum systems. Such problems might require kernels tailored to the system at hand. By exploiting additional properties (sparsity, low-rank structure, weak coupling between subsystems), it could be possible to improve the performance of kernel-based methods. Furthermore, the kernel flow approach proposed in \cite{Owhadi19} could be extended to operator estimation problems. This would allow us to also learn the kernel from data.

Another topic for future research would be to consider other types of symmetries and to develop kernels that explicitly take these properties into account. While the antisymmetric kernel can be evaluated efficiently using matrix factorizations, this is not possible for the symmetric kernel, which requires the evaluation of a matrix permanent. Utilizing efficient approximation schemes could speed up the generation of the required Gram matrices significantly. Alternatively, the product or quotient formulation of symmetric kernels could be exploited to facilitate the application of the proposed methods to higher-dimensional problems.

\section*{Acknowledgments}

We would like to thank Jan Hermann for helpful discussions about quantum chemistry and the reviewers for their helpful comments and suggestions.

\section*{Funding}

P.~Gelß and F.~No\'e have been partially funded by Deutsche Forschungsgemeinschaft (DFG) through grant CRC 1114 \emph{``Scaling Cascades in Complex Systems''} (project ID: 235221301, projects A04 and B06). F.~No\'e also acknowledges funding from BMBF through the Berlin Institute for the Foundations of Learning and Data (BIFOLD), the European Commission (ERC CoG 772230), and the Berlin Mathematics center MATH+ (AA2-8).

\section*{Data availability}

The data and code that support the findings of this study are openly available at \url{https://github.com/sklus/d3s/}.

\bibliographystyle{unsrturl}
\bibliography{AntisymmetricKernels}

\appendix

\section{Hyperpermanents}
\label{app:hyperpermanents}

Given a tensor $T \in \R^{d^{\times 4}} = \R^{d \times d \times d \times d}$, the hyperpermanent of $T$ is given by 
\begin{equation*}
  \hper(T) = \sum_{\pi \in S_d} \prod_{i=1}^d \prod_{j=1}^d t_{\pi(i),\pi(j), i, j} = \sum_{\pi \in S_d} \prod_{i=1}^d \prod_{j=1}^d t_{i,j,\pi(i),\pi(j)}.
\end{equation*}
In Example~\ref{ex:graphs}, we considered tensor entries of the form
\begin{equation}\label{eq:Gaussian kernel}
    t_{i,j,k,l} = \exp(-\frac{\big(a_{i,j} - a^\prime_{k,l}\big)^2}{2 \sigma^2})
\end{equation}
for adjacency matrices $A, A^\prime \in \R^{d \times d}$ in order to construct the symmetrized Gaussian kernel for graphs. Another example for defining the entries of $T$, as described in Section~\ref{sec:acyclic}, is
\begin{equation}\label{eq:Laplacian kernel}
  t_{i,j,k,l} = \exp(-\frac{\abs{a_{i,j} - a^\prime_{k,l}}}{\sigma}),
\end{equation}
which results in the symmetrized Laplacian kernel for graphs. For both choices, we set $t_{i,i,k,k} = \exp(-1/2 \sigma^2)$ and $t_{i,i,k,k} = \exp(-1/\sigma)$, respectively, if $a_{i,i} \neq a^\prime_{k,k}$ and $t_{i,i,k,k} = \exp(0)$, otherwise. In what follows, we will consider  different techniques for computing the hyperpermanent of $T$.

\subsection{Laplace expansion for the computation of hyperpermanents}
\label{app:Laplace}

Define $T^{(\mu)}  \in \R^{d^{\times 4}}$ by
\begin{equation*}
t^{(\mu)}_{i,j,k,l} =
\begin{cases}
t_{i,j,k,l} \cdot  t_{i,1,k,\mu} \cdot t_{1,j,\mu,l}, & \textrm{if } i=j \textrm{ and } k=l,\\
t_{i,j,k,l}, & \textrm{otherwise.}
\end{cases}
\end{equation*}
Let $\widehat{T}^{(\mu)} \in \R^{(d-1)^{\times 4}}$ denote the tensor that results from $T^{(\mu)}$ by removing all entries $t^{(\mu)}_{i,j,k,l}$ with $i=1$, $ j=1$, $k=\mu$, or $l = \mu$. The hyperpermanent can then be written as
\begin{align*}
    \hper(T) &= \sum_{\pi \in S_d} \ts \prod_{i=1}^d \ts \prod_{j=1}^d t_{i, j, \pi(i), \pi(j)}\\
    &= \sum_{\pi \in S_d} t_{1, 1, \pi(1), \pi(1)} \prod_{i=2}^d t_{i, 1, \pi(i), \pi(1)} \prod_{j=2}^d t_{1,j,\pi(1),\pi(j)} \prod_{i=2}^d \ts \prod_{j=2}^d t_{i, j, \pi(i), \pi(j)}  \\
    &= \sum_{\mu=1}^d t_{1, 1, \mu, \mu} \sum_{\substack{\pi \in S_d\\ \pi(1) = \mu} } \prod_{i=2}^d \ts \prod_{j=2}^d t_{i, j ,\pi(i), \pi(j)} \left ( t_{i, 1, \pi(i), \mu} \ts t_{1,j,\mu,\pi(j)} \right )^{\delta_{ij}}\\
    &= \sum_{\mu=1}^d t_{1, 1, \mu, \mu} \hper \left(\widehat{T}^{(\mu)} \right),
\end{align*}
where $\delta_{ij}$ denotes the Kronecker delta. Note that $i=j$ implies $\pi(i) = \pi(j)$.

\subsection{Hyperpermanents of pairwise symmetric tensors}

Suppose $t_{i,j,k,l} = t_{j,i,k,l}$ and $t_{i,j,k,l} = t_{i,j,l,k}$ for all $i,j,k,l \in \{1, \dots, d\} $. For instance, this is the case for tensors $T$ containing elementwise evaluations of Gaussian and Laplacian kernels as given in \eqref{eq:Gaussian kernel} and \eqref{eq:Laplacian kernel}, respectively. The hyperpermanent of $T$ can then be written as
\begin{equation} \label{eq:hper symmetric}
\begin{split}
    \hper(T) &= \sum_{\pi \in S_d} \ts \prod_{i=1}^d \ts \prod_{j=1}^d t_{i, j, \pi(i), \pi(j)}\\
    &= \sum_{\pi \in S_d} \ts \prod_{i=1}^d \left( t_{i, i, \pi(i), \pi(i)} \ts \prod_{\substack{j=1 \\ j \neq i}}^d t_{i, j, \pi(i), \pi(j)} \right)\\
    &= \sum_{\pi \in S_d} \ts \left( \prod_{i=1}^d t_{i, i, \pi(i), \pi(i)} \right) \left( \prod_{i=1}^d \ts \prod_{\substack{j=1 \\ j \neq i}}^d t_{i, j, \pi(i), \pi(j)} \right)\\
    &= \sum_{\pi \in S_d} \ts \left( \prod_{i=1}^d t_{i, i, \pi(i), \pi(i)} \right) \left(\prod_{i=1}^d \ts \prod_{j = i+1}^d t_{i, j, \pi(i), \pi(j)}^2 \right)\\
    &= \sum_{\pi \in S_d} \frac{ \prod_{i=1}^d \ts \prod_{j = i}^d t_{i, j, \pi(i), \pi(j)} ^2}{\prod_{i=1}^d t_{i, i, \pi(i), \pi(i)}}.
\end{split}
\end{equation}
The advantage of the above formula is that we can reduce the computational costs for the hyperpermanent. Additionally, we do not have to compute all elements of $T$, which also reduces the computational costs.

\subsection{Hyperpermanents for graphs with isolated nodes}
\label{app:isolated}

In order to compare graphs of different sizes, we include artificial isolated nodes in Section~\ref{sec:acyclic}. That is, the adjacency matrix of a given graph is expanded by adding zero entries. In this case, all permutations among the isolated nodes do not change the result of the product $\prod_{i=1}^d \ts \prod_{j=1}^d t_{i, j, \pi(i), \pi(j)}$. Assume that the dimension of the adjacency matrix $A$ used in~\eqref{eq:Gaussian kernel} and~\eqref{eq:Laplacian kernel} is initially $d^\prime < d$ before the expansion to $R^{d \times d}$. Then, it holds that
\begin{equation*}
 t_{i,j,\pi(i), \pi(j)} = t_{d^\prime+1, d^\prime +1, \pi(i), \pi(j)},
\end{equation*}
if $i>d^\prime$ or $j>d^\prime$. Thus, given two permutations $\pi_1$ and $\pi_2$ with
\begin{equation*}
 \pi_1(i) = \pi_2(i) \text{ for } 1 \leq i \leq d^\prime, 
\end{equation*}
it follows that
\begin{equation*}
    \frac{ \prod_{i=1}^d \ts \prod_{j = i}^d t_{i, j, \pi_1(i), \pi_1(j)} ^2}{\prod_{i=1}^d t_{i, i, \pi_1(i), \pi_1(i)}} = \frac{ \prod_{i=1}^d \ts \prod_{j = i}^d t_{i, j, \pi_2(i), \pi_2(j)} ^2}{\prod_{i=1}^d t_{i, i, \pi_2(i), \pi_2(i)}}.
\end{equation*}
This means that for each permutation $\pi \in S_d$, any of the $(d-d^\prime)!$ permutations of the set ${\{\pi(d^\prime+1), \dots, \pi(d)\}}$ does not change the value of the quotient in~\eqref{eq:hper symmetric}. This fact can be exploited using the formula
\begin{equation*}
    \hper(T) = \sum_{\mathclap{\substack{\pi \in S_d\\ \pi(d^\prime+1) < \dots < \pi(d)}}} (d-d^\prime)! \, \frac{ \prod_{i=1}^d \ts \prod_{j = i}^d t_{i, j, \pi(i), \pi(j)} ^2}{\prod_{i=1}^d t_{i, i, \pi(i), \pi(i)}},
\end{equation*}
which enables us to decrease the number of considered permutations significantly if $d^\prime$ is much smaller than $d$.

\section{Quotient representation of symmetric Gaussian kernels}\label{app: quotient kernels}

Under the same assumptions as given in Example~\ref{ex: quotient Gauss kernel}, we use Lemma~\ref{lem:Gaussian Slater kernel} in order to write $k_s(x,x^\prime)$ as
\begin{equation*}
 k_s(x,x^\prime) = \frac{k^{(1)}_a(x,x^\prime)}{k^{(2)}_a(x,x^\prime)} = \frac{e^{-\frac{(x_1 - x_1^\prime)^2}{2 \sigma_1^2}} e^{-\frac{(x_2 - x_2^\prime)^2}{2 \sigma_1^2}} - e^{-\frac{(x_1 - x_2^\prime)^2}{2 \sigma_1^2}} e^{-\frac{(x_2 - x_1^\prime)^2}{2 \sigma_1^2}}}{e^{-\frac{(x_1 - x_1^\prime)^2}{2 \sigma_2^2}} e^{-\frac{(x_2 - x_2^\prime)^2}{2 \sigma_2^2}} - e^{-\frac{(x_1 - x_2^\prime)^2}{2 \sigma_2^2}} e^{-\frac{(x_2 - x_1^\prime)^2}{2 \sigma_2^2}}},
\end{equation*}
where $x_1 \neq x_2$ and $x_1 ^\prime \neq x_2^\prime$. From l'H{\^o}pital's rule follows that
\begin{align*}
   \lim_{x_2 \rightarrow x_1} \frac{k^{(1)}_a(x,x^\prime)}{k^{(2)}_a(x,x^\prime)} &= \lim_{x_2 \rightarrow x_1} \frac{\frac{\partial}{\partial x_2} k^{(1)}_a(x,x^\prime)}{\frac{\partial}{\partial x_2} k^{(2)}_a(x,x^\prime)} \\
   &= \lim_{x_2 \rightarrow x_1} \frac{-\frac{x_2 - x_2^\prime}{\sigma_1^2} \, e^{-\frac{(x_1 - x_1^\prime)^2}{2 \sigma_1^2}} \, e^{-\frac{(x_2 - x_2^\prime)^2}{2 \sigma_1^2}} + \frac{x_2 - x_1^\prime}{\sigma_1^2} \, e^{-\frac{(x_1 - x_2^\prime)^2}{2 \sigma_1^2}} \, e^{-\frac{(x_2 - x_1^\prime)^2}{2 \sigma_1^2}}}{-\frac{x_2 - x_2^\prime}{\sigma_2^2} \, e^{-\frac{(x_1 - x_1^\prime)^2}{2 \sigma_2^2}} \, e^{-\frac{(x_2 - x_2^\prime)^2}{2 \sigma_2^2}} + \frac{x_2 - x_1^\prime}{\sigma_2^2} \, e^{-\frac{(x_1 - x_2^\prime)^2}{2 \sigma_2^2}} \, e^{-\frac{(x_2 - x_1^\prime)^2}{2 \sigma_2^2}}}\\
   &= \frac{ \left( -\frac{x_1 - x_2^\prime}{\sigma_1^2} + \frac{x_1 - x_1^\prime}{\sigma_1^2} \right) \, e^{-\frac{(x_1 - x_1^\prime)^2}{2 \sigma_1^2}} \, e^{-\frac{(x_1 - x_2^\prime)^2}{2 \sigma_1^2}}}{\left( -\frac{x_1 - x_2^\prime}{\sigma_2^2} + \frac{x_1 - x_1^\prime}{\sigma_2^2} \right) \, e^{-\frac{(x_1 - x_1^\prime)^2}{2 \sigma_2^2}} \, e^{-\frac{(x_1 - x_2^\prime)^2}{2 \sigma_2^2}}}\\
   &= \frac{\sigma_2^2}{\sigma_1^2} \, \frac{ e^{-\frac{\lVert (x_1, x_1) - (x_1^\prime, x_2^\prime) \rVert^2}{2 \sigma_1^2}}}{e^{-\frac{\lVert (x_1, x_1) - (x_1^\prime, x_2^\prime) \rVert^2}{2 \sigma_2^2}}}.
\end{align*}

\end{document}